\renewcommand\footnotetextcopyrightpermission[1]{}
\renewcommand\@formatdoi[1]{\ignorespaces}
\renewcommand\footnotetextcopyrightpermission[1]{}
\DeclareMathOperator*{\argmax}{arg\,max}
\newcommand{\abs}[1]{\ensuremath{\left|{#1}\right|}}
\newcommand{\tup}[1]{\ensuremath{\left\langle{#1}\right\rangle}}
\newcounter{obs}
\newenvironment{obs}[1][]{\refstepcounter{obs}\par\medskip
    \textsc{Observation~\theobs. \textit{#1}} \rmfamily}{\medskip}
\def\STI/{\textsf{ST-Index}}
\def\Traversal/{\textsf{Traversal}}
\def\Order/{\textsf{Order}}
\def\FS/{\textsf{FromScratch}}
\def\SE/{\textsf{SingleEdge}}
\def\BA/{\textsf{Batch}}
\def\K/{\ensuremath{\mathcal{K}}}
\def\C/{\ensuremath{\mathcal{C}}}
\def\H/{\ensuremath{\mathcal{H}}}
\title{Batch Dynamic Algorithm to Find $k$-Cores and Hierarchies}
\author{Kasimir Gabert}
    \institution{Sandia National Laboratories}
\affiliation{%
    \institution{Georgia Institute of Technology}
    \city{Atlanta}
    \state{Georgia}
    \country{USA}
}
\email{kasimir@gatech.edu}
\author{Ali P{\i}nar}
\affiliation{%
    \institution{Sandia National Laboratories}
    \city{Livermore}
    \state{California}
    \country{USA}
}
\email{apinar@sandia.gov}
\author{\"{U}mit V. \c{C}ataly\"{u}rek}
    \institution{Amazon Web Services. This publication describes work performed at the Georgia Institute of Technology and is not associated with Amazon}
\affiliation{%
    \institution{Georgia Institute of Technology}
    \city{Atlanta}
    \state{Georgia}
    \country{USA}
}
\email{umit@gatech.edu}
\begin{document}

\begin{abstract}
    Finding $k$-cores in graphs is a valuable and effective strategy for
    extracting dense regions of otherwise sparse graphs.  We focus on the
    important problem of maintaining cores on rapidly changing dynamic
    graphs, where batches of edge changes need to be processed quickly.
    Prior batch core algorithms have only addressed half the problem of
    maintaining cores, the problem of maintaining a core decomposition. This finds
    vertices that are dense, but not regions; it misses connectivity.  To
    address this, we bring an efficient index from community search into
    the core domain, the Shell Tree Index.  We develop a novel dynamic
    batch algorithm to maintain it that improves efficiency over
    processing edge-by-edge.  We implement our algorithm and
    experimentally show that with it core queries can be returned
    on rapidly changing graphs quickly enough for interactive applications.
    For 1 million edge batches, on many graphs we run over $100\times$ faster than
    processing edge-by-edge while remaining under re-computing from scratch.
\end{abstract}

\maketitle

\section{Introduction}

An important problem in graph analysis is finding locally dense regions in
globally sparse graphs.  In this work we consider the problem of finding
$k$-cores~\cite{seidman1983network,matula1983smallest}, which are maximal
connected subgraphs with minimum degree at least $k$.  This problem has
seen significant attention given its efficiency~\cite{matula1983smallest}
and usefulness across many domains~\cite{kumar2000web,alvarez2005k,
hagmann2008mapping,van2011rich,garcia2017ranking,filho2018hierarchical,
kong2019k}.

Many practically important graphs from web data, social networks, and
related fields are both large and continuously changing.
The problem of
maintaining core \emph{decompositions} on graphs has been well
studied~\cite{li2013efficient,sariyuce2013streaming,zhang2017fast,
zhang2019unboundedness}.  Existing approaches run in linear time in the
size of the graph, which is theoretically
optimal~\cite{zhang2019unboundedness}, and on many real-world graphs they
maintain decompositions within milliseconds after edge changes.  So, is
the problem solved?

Unfortunately, these approaches only address half of the problem of
returning a $k$-core\cite{sariyuce2016fast}.  $k$-cores are originally
defined as \emph{connected} subgraphs~\cite{seidman1983network}.  All of
the application examples referenced above rely on or use connectivity.
A core decomposition, on the other hand, provides \emph{coreness} values
for every vertex: that is, the largest value such that a vertex is in
a $k$-core, but not in a $(k+1)$-core.  Prior approaches have either
ignored connectivity (which provides limited, but some insight
e.g.,~\cite{kitsak2010identification}) or left the final step of finding
components as a separate process.  The main tool to address computing connectivity on cores, or a \emph{core hierarchy},
has been independently proposed several
times~\cite{barbieri2015efficient,sariyuce2016fast,fang2017effective,
fang2020effective} in different contexts, and concurrently developed in~\cite{lin2021hierarchical}.  We introduce this index in the
most basic setting, designed for $k$-cores on simple undirected graphs, and
we call it the Shell Tree Index (\STI/).  This index supports queries to
extract the cores a vertex is in along with the full core
hierarchy of a graph.

\begin{figure}[tbp]
    \centering
    \includegraphics{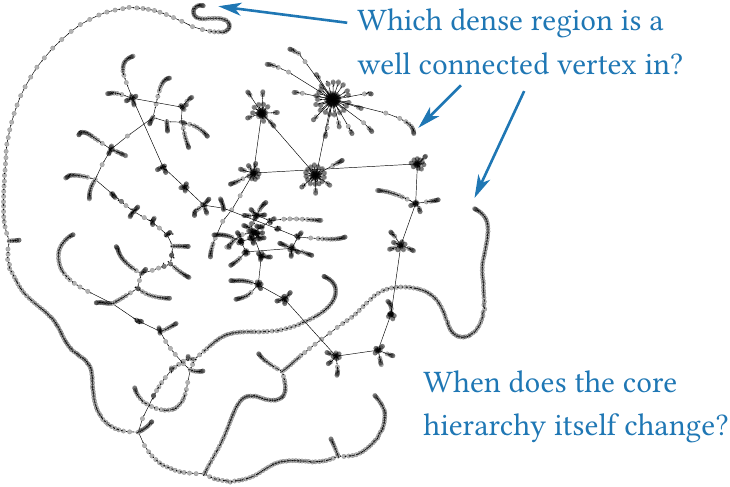}
    \caption{\label{fig:lj}
    The core hierarchy for the LiveJournal social network graph. Tracking
    dense regions behavior over time is important for understanding
    structural changes, and extracting the vertices within a dense region
    is important for almost all known $k$-core applications.
    }
\end{figure}

\paragraph{Example Problem}

Consider the problem of managing a social network.  First, given a user,
we wish to recommend friends to them that are well connected \emph{in
their part} of the graph: this is a vertex and coreness query.  Second, we
want to detect structural changes, for example sybil
attacks~\cite{douceur2002sybil} from new fake accounts: this is
a hierarchy query.  Figure~\ref{fig:lj} shows the core hierarchy of the
LiveJournal graph~\cite{yang2015defining} and how far apart different
dense regions are.  For both query examples, we want results in tens of
milliseconds to either prepare a webpage or mitigate an emerging attack.

In this example scenario, a state-of-the-art core decomposition system is
put in place, which provides coreness updates quickly after graph changes.
The two goals above require information about \emph{specific cores}.  If
certain vertices achieve higher coreness values, this does not inform
whether a new region is created.  Furthermore, unless there is only one
dense region, it will not enable useful recommendations.  Instead, we need
systems and algorithms that can quickly and effectively return \emph{cores
themselves} along with their full hierarchies.

\paragraph{Approach}

The \STI/ builds on the laminar nature of cores.
For $k' < k$, every $k$-core is contained within some $k'$-core, naturally forming a tree.
Each node in this tree corresponds to the \emph{shell} of the core, that is
vertices which are not in any higher core.
Coupled with a reverse map, a core can
be efficiently returned by traversing the subtree staying below the
desired $k$ value.
The core hierarchy is the tree.

We build the tree by first identifying regions of the graph where the
cores are the same, known as subcores, and then forming a directed acyclic
graph (DAG) with each subcore as a node.
Starting from the highest $k$ values, we process nodes in the DAG upwards, merging and moving them to form a tree.

The only known prior maintenance approach, operating for attributed graphs and used as part of solving community search, is given
in~\cite{fang2017effective}.
We first port this maintenance
approach to the case of $k$-cores on standard graphs and use that as our edge-by-edge baseline.
Given an edge
change, it maintains the \STI/ by either merging or splitting nodes on paths to the root.
Concurrent with this work, \cite{lin2021hierarchical} builds on~\cite{fang2017effective}'s approach by batching operations on the tree.

In real-world graphs there is significant variance in the rate of change.
As such, batch dynamic algorithms that can reduce the
total work when operating on batches are
desired~\cite{luo2020batch,dhulipala2020parallel}.  We provide a
batch dynamic algorithm to maintain cores themselves, starting from
core decompositions.
We do this by maintaining the subcore DAG used during construction.
After a batch of changes, we revisit each node in the DAG that was modified and re-compute any subcore changes.
Any DAG changes are then pushed into the tree, temporarily turning the tree back into a DAG.
We then traverse from the sink upwards, correcting the tree.

\paragraph{Contributions}
In addition to bringing the \STI/ from the community search domain into the direct, $k$-core domain, we prove efficiency properties on the \STI/.
Our main contributions are:
\begin{enumerate}
    \item A subcore DAG based \STI/ construction algorithm
    \item A batch dynamic algorithm to maintain \STI/ that
        reduces the work of edge-by-edge updates
    \item An experimental evaluation on real-world graphs that
        show with both our edge-by-edge and batch algorithms, \STI/ is suitable for interactive use
\end{enumerate}

The remainder of this paper is structured as follows.  In
\S~\ref{sec:related} we describe the related work.  In
\S~\ref{sec:preliminaries} we formally describe our model and problem.  In
\S~\ref{sec:sti} we present \STI/.  In \S~\ref{sec:computing-sti} we
provide our algorithm to compute \STI/ from scratch.  In
\S~\ref{sec:maint-sti} we explain how to maintain \STI/ for dynamic graphs
and introduce our batch algorithm.  In \S~\ref{sec:experiments} we
experimentally evaluate our implementations, and in
\S~\ref{sec:conclusion} we conclude.

\section{Related Work} \label{sec:related}

$k$-cores were introduced independently
in~\cite{seidman1983network,matula1983smallest}.
\cite{matula1983smallest} additionally provided a peeling algorithm that uses
bucketing to run in $O(n + m)$.  The main
strategy for computing $k$-cores has remained roughly the same since then:
iteratively peeling the graph, or excluding vertices with too low of
degrees, until all degrees are $k$.

For maintenance, \cite{li2013efficient} and \cite{sariyuce2013streaming}
independently proposed \Traversal/, which limits consideration of vertices around an edge
change if they provably cannot update values. \cite{sariyuce2013streaming}
defines the
notion of subcores and purecores, variants of which are used in all known
maintenance algorithms to limit considered subgraphs. \cite{zhang2017fast}
proposed \Order/, which is the current state-of-the-art and maintains
a \emph{peeling order}, instead of coreness values directly, using an order-statistic treap
and a heap.
Parallel approaches have relied on identifying
a set of vertices that can be independently peeled~\cite{jin2018core,
aksu2014distributed,hua2019faster,aridhi2016distributed}.
\cite{bai2020efficient,zhang2019unboundedness} provide batch algorithms
that reduce work as multiple edges are processed simultaneously.

All of the above focus on computing the \emph{coreness values} for vertices.  In
fact, the lack of focus on connectivity has, in some cases, resulted in
later work redefining cores to not include connectivity
(e.g.,~\cite{malliaros2020core}) which limits their usefulness.

Numerous other targets, similar to cores, have been
proposed~\cite{malliaros2020core}.  \cite{eidsaa2013s,zhou2020core}
develop weighted extensions to cores, \cite{linghu2020global} uses core
concepts to
reinforce connections within networks, \cite{galimberti2020core} proposes
notions of cores for multilayer networks, and \cite{zhang2020exploring}
ensures vertices in core-like regions are also relatively
cohesive given their neighbors.
In cases where the cores are used for downstream
algorithms, returning the actual (connected) vertices is identified as
crucial and algorithms are built to support such
queries~\cite{liu2019efficient}.

Community search~\cite{sozio2010community,cui2014local} is a more general
problem for returning a connected set of vertices in a community based on
a seed set.  The community is commonly defined with a \emph{minimum
degree} measure\cite{fang2020survey}.  In this case, if the query consists
of a single vertex, community search can return exactly a core.  For this
reason, we pull from the field of community search to develop \STI/.
\cite{barbieri2015efficient} proposed the first known shell tree index.
It does not support efficient queries, as it creates additional vertices
for each coreness level that must be addressed.  \cite{sariyuce2016fast}
identifies the same problem that we address---cores require
connectivity---and proposes a shell tree-like index with a static construction in the more general nuclei framework, but leaves out
maintenance. \cite{fang2017effective} operates on attributed graphs and extends
\cite{sariyuce2016fast}'s approach and \cite{barbieri2015efficient}'s
index with incremental and decremental algorithms, but without batch
algorithms.  We port this approach to the problem of cores and use this as our baseline.
Concurrent with this work, \cite{lin2021hierarchical} provides a batch algorithm that is based on \cite{fang2017effective} and batches changes to the tree directly, without the use of a DAG.

\section{Preliminaries} \label{sec:preliminaries}

A graph $G=(V,E)$ is a set of vertices $V$ and set of edges $E$. An edge
$e \in E$ represents the connection between two distinct vertices $u,v \in
V$, $e=\{u,v\}$.
We denote $\abs{V}$ by $n$ and $\abs{E}$ by $m$.

We use $\Gamma(v)$ to represent the neighboring edges of $v \in V$.
The degree of $v \in V$ is $\abs{\Gamma(v)}$.
For directed graphs, $\Gamma^\mathrm{in}$ represents
edges ending at the given vertex and $\Gamma^\mathrm{out}$ represents
edges leaving a vertex. If the graph is ambiguous, we use $\Gamma_G$ for
graph $G$. The neighborhood of a vertex set $S \subseteq V$, $\Gamma(S)$,
represents the vertices and edges connected to $S$, that is it is the
subgraph induced by $S$ and all neighbors of vertices in $S$.

\paragraph{Dynamic Graph Model}
We consider graphs
that are
changing over time, known as dynamic graphs.  An \emph{edge change} is
a tuple $\tup{c, v, e}$ consisting of a direction $c$,  a vertex $v \in
V$, and an edge $e \in E$.  A dynamic graph is then an infinite turnstile
stream of edge changes $\mathcal{S}$, where time is the position in the
stream. At any point in time an undirected graph
$G^t$
can be formed by applying all edge changes until $t$, starting from an
empty graph.

In this model, the timestamp of edges received is not preserved and not used by the algorithm.
An algorithm that does take into consideration timestamps is called a \emph{temporal} algorithm, and can either be dynamic or static.

\begin{definition}\label{def:alg}
    Let $\mathcal{A}$ be a graph algorithm with output $\mathcal{A}(G)$.
    Then $\mathcal{A}_\Delta$ is a dynamic graph algorithm if, for some
    times $t$ and $t'$, with $t < t'$,
    \begin{equation*}
        \mathcal{A}_\Delta\left(G^{t}, \mathcal{A}(G^{t}),
        \mathcal{A}_\Delta^t,
        \mathcal{S}[t, t']\right)
        = \tup{\mathcal{A}(G^{t'}),
        \mathcal{A}_\Delta^{t'}},
    \end{equation*}
    where $\mathcal{A}_\Delta^t$ contains algorithm state at $t$ and
    $\mathcal{S}[t,t']$ are the edge changes in $\mathcal{S}$ from $t+1$ to
    $t'$.
\end{definition}

We call an \emph{incremental algorithm} a dynamic graph algorithm which
can only handle edge insertions and
a \emph{decremental algorithm} one which can only handle edge deletions.
A \emph{batch dynamic} algorithm can handle $t' > t+1$.
Our batch algorithm, described in
Section~\ref{sec:maint-sti}, has an additional state bound by the size of
the graph.

\paragraph{Cores}

We provide a brief background on $k$-cores.

\begin{definition}\label{def:core}
    Let $G$ be a graph
    and $k \in \mathbb{N}$.  A $k$-core
    in $G$ is a set of vertices $V'$ which induce a subgraph $K=(V', E')$
    such that: (1) $V'$ is maximal in $G$; (2) $K$ is connected; and (3)
    the minimum degree is at least $k$, $\min_{v \in V'} \abs{\Gamma_K(v)}
    \geq k$.
\end{definition}

Figure~\ref{fig:example-kcore} shows an example graph and its
cores.
There are two \emph{separate} $k=3$ cores, one with vertices $1$ through $4$ and the other with vertices $7$ through $10$.
If all vertices with less than a degree $3$ are iteratively removed, the remaining graph consists of those two separate connected components.

\begin{figure}
    \centering
    \includegraphics{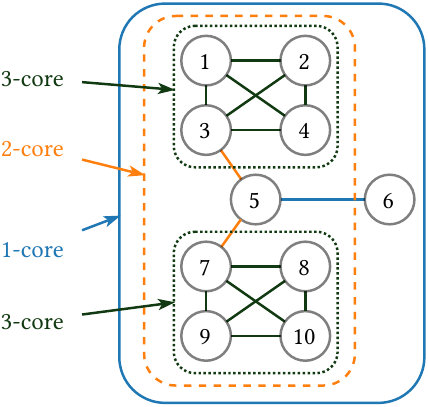}
    \caption{\label{fig:example-kcore}
    An example graph and its cores. Note that there are two separate
    $3$-cores.
    }
\end{figure}

\begin{definition}\label{def:coreness}
    Let $G=(V,E)$ be a graph and $v \in V$.
    The coreness of $v$, denoted $\kappa[v]$, is the value $k$ such that
    $v$ is in a $k$-core
    but not in a $(k+1)$-core.
\end{definition}

\begin{definition}\label{def:degeneracy}
    Let $G=(V,E)$ be a graph.  The $k$-core number of $G$, denoted
    $\rho_G$ and shortened to $\rho$, is given by $\rho = \max_{v \in V}
    \kappa[v]$.
\end{definition}

\paragraph{Problem Statement}

We consider the problem of efficiently supporting core and coreness
queries on a dynamic graph stream. Let $k \in \mathbb{N}$ and $u \in V$.
\begin{itemize}
    \item The coreness query $\K/(u)$ returns $\kappa[u]$.
    \item The core query $\C/(u, k)$ returns the vertices of the $k$-core
        subgraph that contains $u$.
    \item The hierarchy query $\H/$ returns the hierarchical structure of
        the cores as a tree, with the root as the $0$-core
\end{itemize}
Prior work in the context of cores has focused only on supporting $\K/$
queries on dynamic graphs.
Unfortunately, this prevents many of the applications of $k$-cores which
rely on \emph{extracting dense regions} of a graph.

\section{Shell Tree Index} \label{sec:sti}

In this section we present the Shell Tree Index, \STI/, which is able to
efficiently return cores for different vertices: its runtime is asymptotically the size of the result and its space is linear in the number of vertices.  This index
has been independently developed several
times~\cite{barbieri2015efficient,sariyuce2016fast,fang2017effective,
fang2020effective,lin2021hierarchical} in different contexts.
We present the index here for completeness.
We will address how to construct the index in
Section~\ref{sec:computing-sti} and how to maintain it in
Section~\ref{sec:maint-sti}.

$\K/(u)$ queries, or \emph{coreness} queries, can be
efficiently returned using an array of size $n$.  We therefore focus on
$\C/$ and $\H/$ queries.

\begin{figure}
    \centering
    \includegraphics{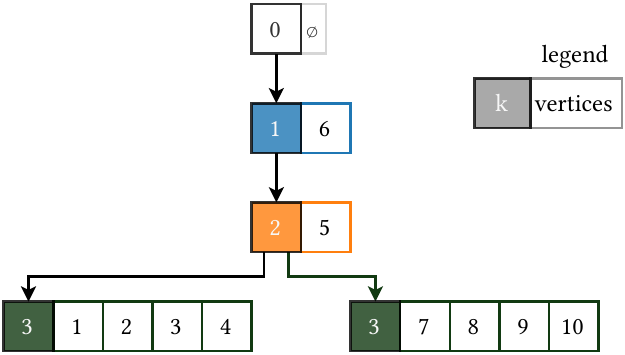}
    \caption{\label{fig:sti}
    The shell tree for the graph shown in Figure~\ref{fig:example-kcore}.
    On the left side are the $k$-shell values, and on the right side are
    the contained vertices.  Each directed edge indicates inclusion of the
    deeper cores.
    }
\end{figure}

\begin{lemma}[\cite{sariyuce2013streaming}]\label{lem:laminar}
    Cores form a laminar family, that is every pair of cores are either disjoint or one is contained in the other.
\end{lemma}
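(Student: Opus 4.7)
The plan is to take two arbitrary cores and split into cases based on whether they share a vertex. Let $K_1$ be a $k_1$-core and $K_2$ be a $k_2$-core in $G$ with, without loss of generality, $k_1 \le k_2$. If $K_1 \cap K_2 = \emptyset$ we are done, so assume they share at least one vertex, and the goal becomes showing $K_2 \subseteq K_1$.

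First I would form the subgraph induced by the union of the two vertex sets, $H = G[V(K_1) \cup V(K_2)]$. Because $K_1$ and $K_2$ are each connected (Definition~\ref{def:core}) and share a vertex, $H$ is connected. Next I would check the minimum-degree condition on $H$: for any $v \in V(K_1)$, the degree of $v$ in $H$ is at least its degree in $K_1$, which is $\ge k_1$; for any $v \in V(K_2) \setminus V(K_1)$, its degree in $H$ is at least its degree in $K_2$, which is $\ge k_2 \ge k_1$. Hence $H$ is a connected subgraph of $G$ with minimum degree $\ge k_1$ that contains $V(K_1)$. By the maximality clause of Definition~\ref{def:core} applied to $K_1$, we must have $V(H) = V(K_1)$, and therefore $V(K_2) \subseteq V(K_1)$.

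The symmetric case $k_1 = k_2$ is covered automatically: the same argument run with the roles of $K_1$ and $K_2$ swapped gives $K_1 \subseteq K_2$, so equality holds. Thus any two cores are either disjoint or nested, which is exactly the laminar property.

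I do not expect any real obstacle here; the only subtle point is being careful to work with the \emph{induced} subgraph on the union of the vertex sets (so that every edge between shared-region vertices is counted toward degrees) and to invoke maximality for the smaller-$k$ core, not the larger one — the containment runs ``larger-$k$ core inside smaller-$k$ core,'' matching the intuition that higher cores sit inside lower ones.
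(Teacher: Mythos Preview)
Your proof is correct and follows essentially the same approach as the paper: both assume the two cores overlap, build a connected subgraph of minimum degree at least $k_1$ from their union, and invoke maximality of the $k_1$-core to force containment. Your version is in fact a bit cleaner---you take the full induced subgraph on $V(K_1)\cup V(K_2)$ and verify the degree condition directly, whereas the paper argues via a path from $K_1$ to a vertex of $K_2\setminus K_1$; and you handle the $k_1=k_2$ case explicitly rather than dismissing it up front.
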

\begin{proof}
    We want to show that for every two cores $K_1$ and $K_2$, $K_1 \cap K_2$ is exactly one of $\emptyset$, $K_1$, or $K_2$.

    Let $K_1$ and $K_2$ be two cores of $G$, with corresponding $k$ values
    $k_1$ and $k_2$.
    Suppose $K_1 \cap K_2 \neq \emptyset$, implying
    $K_1$ is connected to $K_2$.  Note that $k_1 \neq k_2$, otherwise $K_1
    \cup K_2$ is a $k_1$-core, invalidating maximality.  Let $k_1 < k_2$.
    Suppose $\exists v \in K_2$ such that $v \not\in K_1$.  $v$ must be
    connected in $K_2$, and so there exists a path from $K_1$ to $v$ with
    minimum degree at least $k_2$.  Let $K_1'$ be a subgraph that includes
    $K_1$ and the path to $v$.  Then, $K_1'$ is a $k_1$-core and larger
    than $K_1$, invalidating maximality.
\end{proof}

\begin{definition}\label{def:shell}
    Let $G=(V,E)$ be a graph and $K \subseteq V$ a $k$-core in $G$
    for some $k \in \mathbb{N}$.  Then $S$ is a $k$-shell if $S = \{v \in K:
    \kappa[v] = k\}$.
\end{definition}

Note that the shell is \emph{disconnected}, however it is a subset of
a \emph{connected} core. This means that the traditional approach of using
coreness values to compute the shell does not work. We address shell
computation later in Section~\ref{sec:computing-sti}, using
\emph{subcores}.

A \emph{shell tree} $T$ is at the heart of the \STI/.
We call the vertices of $T$ \emph{tree nodes}, to distinguish from the vertices in $G$.
Each node has two additional pieces of data associated with it: a $k$ value and a set of vertices (in $G$).
$T$ is built as follows.
A root node is made with $k=0$ and a vertex set of isolated vertices (those with $\abs{\Gamma_G(v)} = 0$).
Next, nodes are made in $T$ for every $k$-shell.
Its $k$ attribute is set to $k$ corresponding to the shell and its vertex list is set to the vertices in the $k$-shell.
An edge is created in $T$ by linking $k$-shells, following Lemma~\ref{lem:laminar}.
An example shell tree is shown in Figure~\ref{fig:sti}.
The \STI/ consists of $T$ and a map $M$, mapping $v\in V$ to the appropriate node in $T$.

\begin{lemma}\label{lem:st-tree}
    The shell tree is a directed, rooted tree.
\end{lemma}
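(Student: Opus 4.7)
My plan is to verify the three defining properties of a directed rooted tree: (i) each non-root tree node has exactly one parent, (ii) $T$ is acyclic, and (iii) every tree node is connected to the root by a directed path. The uniqueness in (i) is where Lemma~\ref{lem:laminar} does the real work; (ii) and (iii) will then drop out of a simple monotonicity observation on the shell level $k$.

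For (i), fix a non-root tree node $N$ representing a $k$-shell $S$ of a $k$-core $K$, with $k \geq 1$. Pick any $v \in S$. Since $\kappa[v] = k \geq 1$, $v$ lies in cores of every level strictly below $k$ as well, so the family $\mathcal{F}$ of cores strictly containing $K$ is non-empty; it includes, for example, the connected component of $v$, which is a $0$-core. By Lemma~\ref{lem:laminar}, any two elements of $\mathcal{F}$ share $v$ and are therefore nested, so $\mathcal{F}$ is totally ordered by inclusion and admits a unique minimum element $K^{\ast}$---equivalently, the element with the largest $k$-value in $\mathcal{F}$. The parent of $N$ is then forced to be the shell tree node whose core is $K^{\ast}$ (or, if that shell is empty, the first non-empty ancestor, ultimately the $k=0$ root). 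This gives $N$ exactly one parent.

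For (ii) and (iii), observe that by construction each directed edge of $T$ joins shells of distinct levels, with the larger-$k$ endpoint the child and the smaller-$k$ endpoint the parent. The level $k$ is therefore strictly monotone along any directed path, which precludes cycles. Since the levels are non-negative integers, every chain of parent links must terminate in at most $\rho$ steps---necessarily at the $k = 0$ root.

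The main obstacle I anticipate is the boundary case of (i): confirming that the parent chain always lands on a \emph{genuine} tree node and ultimately absorbs into the root, rather than getting stuck on a phantom empty $k$-shell in some intermediate level. This reduces to noting that non-isolated vertices have coreness at least $1$, so the isolated-vertex $0$-shell (the root) is the unique terminus of every parent chain, and that Lemma~\ref{lem:laminar} collapses any empty intermediate shells into direct parent edges without ambiguity.
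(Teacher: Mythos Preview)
Your proposal is correct and follows essentially the same approach as the paper: both arguments use Lemma~\ref{lem:laminar} to establish that a non-root tree node cannot have two distinct parents (the paper phrases this as a contradiction from two in-edges, you phrase it constructively as a total order on the containing cores). Your treatment is in fact more complete than the paper's, which only explicitly argues uniqueness of the in-edge and notes the root is at $k=0$, whereas you also spell out acyclicity and reachability of the root via the monotonicity of $k$ along parent links.
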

\begin{proof}
    Suppose a tree node $u$, corresponding to core $K_u$ has two in-edges.
    By definition~\ref{def:shell}, each parent corresponds to a unique
    $k$-shell.  Consider the two corresponding cores, $K_1$ and $K_2$.
    They both include $K_u$, yet are distinct, and so they have
    non-trivial overlap contradicting Lemma~\ref{lem:laminar}.
    The root is defined with $k=0$.
\end{proof}

\begin{lemma}\label{lem:compress}
    The out-degree of a non-root tree node with no corresponding vertices
    in the shell tree can be at most 1.
\end{lemma}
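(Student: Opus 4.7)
The plan is to show that the empty-shell assumption forces the core corresponding to $u$ to itself be a core one level higher, after which laminarity pins down a unique child. Let $u$ be a non-root tree node at level $k$ corresponding to the $k$-core $K$, and assume the $k$-shell of $K$ is empty, so every $v \in K$ has $\kappa[v] \geq k+1$ and therefore lies in some $(k+1)$-core.

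First I would show that $K$ is itself a $(k+1)$-core. For the minimum-degree condition, fix any $v \in K$ and let $K_v$ be a $(k+1)$-core containing $v$; by Lemma~\ref{lem:laminar} applied to $K$ and $K_v$ (which share $v$) they are nested, so $v$'s degree inside $K$ is at least $k+1$ (either $K_v \subseteq K$, in which case the $\geq k+1$ neighbors of $v$ inside $K_v$ lie in $K$, or $K \subseteq K_v$, in which case $K$ already inherits minimum degree $\geq k+1$). Connectivity is inherited from $K$ being a $k$-core. For maximality, any $(k+1)$-core $K''$ strictly containing $K$ would also be a $k$-core strictly containing $K$, contradicting the maximality of $K$ as a $k$-core via Lemma~\ref{lem:laminar}.

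With $K$ established as a $(k+1)$-core, I would apply Lemma~\ref{lem:laminar} once more to the family of $(k+1)$-cores inside $K$: any such core either coincides with $K$ or is disjoint from it, so $K$ is the unique $(k+1)$-core contained in $K$. By the shell-tree construction in Section~\ref{sec:sti}, children of $u$ are in one-to-one correspondence with the $(k+1)$-cores contained in $K$; hence $u$ has exactly one child, namely the node for the $(k+1)$-shell of $K$, and in particular its out-degree is at most $1$.

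The main obstacle is the maximality step promoting $K$ from a $k$-core to a $(k+1)$-core; it leans on the fact that every $(k+1)$-core is also a $k$-core together with the uniqueness portion of Lemma~\ref{lem:laminar}, and has to be stated carefully so as not to circularly reuse the shell-tree construction itself.
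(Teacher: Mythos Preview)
Your approach is correct and reaches the same conclusion, but via a different route than the paper. The paper argues by direct contradiction: if the node had out-degree at least two, there would be two distinct $(k+1)$-cores inside $K$; since the empty shell forces every vertex of $K$ into some $(k+1)$-core and distinct $(k+1)$-cores are pairwise disjoint, those cores partition the connected set $K$ into pieces with no edges between them, contradicting connectivity. You instead promote $K$ itself to a $(k+1)$-core and then use maximality to pin down uniqueness. The paper's version is terser because it never verifies the three core axioms for $K$ at level $k+1$; yours is more explicit about \emph{why} exactly one child exists---it identifies that child as $K$ itself.

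One small fix: in your minimum-degree step, the branch ``$K \subseteq K_v$, in which case $K$ already inherits minimum degree $\geq k+1$'' is not valid reasoning, since a subgraph of a $(k+1)$-core need not itself have minimum degree $k+1$. That branch is in fact vacuous: $K_v$ is connected with minimum degree $\geq k$, so $K \subsetneq K_v$ would contradict the maximality of $K$ as a $k$-core; hence $K_v \subseteq K$ always, and only your first case is needed. With this patch the argument goes through cleanly.
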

\begin{proof}
    Let the tree node with no corresponding vertices be at level $k > 0$ with out-degree at least 2.
    Then, there are two distinct \emph{cores} at $k+1$ (not necessarily shells), and
    one core at $k$.  The two cores at $k+1$ must be disconnected by construction.

    However, because the tree node has no corresponding vertices, we know that every vertex in the $k$-core is also in a $(k+1)$-core.
    Furthermore, the $k$-core is connected.
    Hence, it is not possible for the two cores at $k+1$ to be disconnected.
\end{proof}
\begin{lemma}\label{lem:st-size}
    Let $G=(V,E)$ be a graph with $n = \abs{V}$.
    The number of nodes in the shell tree is at most $n+1$ and edges is at most $n$.
\end{lemma}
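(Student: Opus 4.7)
Since the shell tree is a rooted tree by Lemma~\ref{lem:st-tree}, $\abs{E(T)} = \abs{V(T)} - 1$, so the plan is to focus on bounding $\abs{V(T)} \leq n + 1$ and then read off the edge bound.

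The counting engine is simple. By Definition~\ref{def:coreness} each vertex of $G$ has a unique coreness $\kappa[v]$, and by Lemma~\ref{lem:laminar} it lies in a unique $\kappa[v]$-core. Hence the shells of distinct tree nodes are pairwise disjoint subsets of $V$, and the number of tree nodes with non-empty shell is at most $n$. Together with the single root, this would already give the target $n + 1$.

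The main obstacle is that, a priori, a non-root tree node can have an \emph{empty} shell and inflate $\abs{V(T)}$ without being accounted for by any vertex. My plan for this step, which I expect to be the crux, is to invoke Lemma~\ref{lem:compress}: any non-root node with empty shell has out-degree at most $1$, and it cannot be a leaf because a leaf's shell coincides with its entire core, which is non-empty. So the node has exactly one child, and since the empty shell forces that child to already cover the node's whole core, the two share an identical vertex set. Such a node is structurally redundant---collapsing it into its unique child does not change any core or hierarchy query---so we may assume without loss of generality that every non-root tree node carries a non-empty shell, at which point the injection of non-root nodes into distinct shell vertices of $V$ is immediate.

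Combining the two ingredients yields $\abs{V(T)} \leq n + 1$, and then Lemma~\ref{lem:st-tree} gives $\abs{E(T)} \leq n$. The only nontrivial move is the use of Lemma~\ref{lem:compress} to rule out (or collapse) the empty non-root nodes; everything else is a clean pigeonhole on shell vertices.
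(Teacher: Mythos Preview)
Your proposal is correct and follows essentially the same route as the paper: invoke Lemma~\ref{lem:compress} to ensure every non-root node carries at least one vertex, use disjointness of shells to bound non-root nodes by $n$, add one for the root, and then apply Lemma~\ref{lem:st-tree} for the edge count. If anything, your handling of the empty-shell case is more explicit than the paper's, which simply asserts from Lemma~\ref{lem:compress} that each non-root node has a vertex without spelling out the collapse argument.
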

\begin{proof}
    By Lemma~\ref{lem:compress}, each node in the tree (besides the root)
    must have at least one vertex. As there are at most $n$ vertices, the
    size of the tree is at most $n+1$.  By Lemma~\ref{lem:st-tree}, we
    know it is a tree, and so with at most $n+1$ nodes it has at most $n$
    edges.
\end{proof}

\paragraph{Queries on \STI/}

The three queries, $\K/(u)$, $\C/(u, k)$, and $\H/$ are returned as follows.
\begin{itemize}
    \item $\K/(u)$ follows the map $M[u]$ to the shell tree node $n$, and
        then returns the $k$ value for $n$.
    \item $\C/(u,k)$ runs a tree traversal that stays above the level $k$
    \item $\H/$ returns the tree nodes and attributes directly.
\end{itemize}

\paragraph{Efficiency of \STI/}
We next address the efficiency of queries on \STI/.

\begin{theorem}
    $\C/(u,k)$ queries on \STI/ run in $O(\abs{\C/(u,k)})$ and correctly return the $k$-core.
\end{theorem}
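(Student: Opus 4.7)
The plan is to pin down the query algorithm, verify it outputs exactly the $k$-core containing $u$, and then charge its running time against the output size.

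First I would describe the traversal. Given $u$ and $k$, look up $n_u := M[u]$. If the level of $n_u$ is less than $k$, then $\kappa[u] < k$ and $u$ belongs to no $k$-core, so return $\emptyset$. Otherwise walk from $n_u$ toward the root along the unique path guaranteed by Lemma~\ref{lem:st-tree}, and let $n^{*}$ be the topmost ancestor still at level at least $k$. Finally, DFS or BFS the subtree of $T$ rooted at $n^{*}$, emitting the union of the vertex sets stored at the visited tree nodes.

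Next I would establish correctness by two containments. For ``output $\subseteq \C/(u,k)$,'' every tree node in $n^{*}$'s subtree has level at least $k$, so its vertices have coreness at least $k$, and by construction they all lie inside the connected core associated with $n^{*}$, hence inside the $k$-core containing $u$. For the converse, let $w \in \C/(u,k)$. The $k$-core $K$ containing $u$ is itself a core and, by Lemma~\ref{lem:laminar}, corresponds to a tree node $n_K$ at level at least $k$ that is a common ancestor of both $M[u]$ and $M[w]$. Hence $n_K$ lies on $u$'s ancestor chain at level at least $k$, and by the maximality of $n^{*}$ it equals $n^{*}$ or a descendant of $n^{*}$ on that chain; since $M[w]$ is a descendant of $n_K$, it lies in the subtree of $n^{*}$.

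For complexity I would invoke Lemma~\ref{lem:st-size}: each non-root tree node carries at least one vertex, so the number of nodes in $n^{*}$'s subtree is at most $\abs{\C/(u,k)}$, and similarly the number of ancestors visited during walk-up (each a shell inside $K$, plus at most one extra node where the stopping condition fires) is $O(\abs{\C/(u,k)})$. The traversal spends $O(1)$ per visited node and $O(1)$ per emitted vertex, giving $O(\abs{\C/(u,k)})$ overall. The main obstacle is the $\subseteq$ direction of correctness, where one must translate the laminar property of Lemma~\ref{lem:laminar} into a clean LCA statement on the shell tree so that the whole $k$-core is known to live under a single anchor $n^{*}$ rather than being scattered across several subtrees.
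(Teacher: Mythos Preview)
Your proposal is correct and follows essentially the same approach as the paper: both establish correctness via Lemma~\ref{lem:laminar} and bound the running time by the output size using the fact that every non-root tree node carries at least one vertex (note that this fact is really Lemma~\ref{lem:compress}, not Lemma~\ref{lem:st-size}, which merely uses it). Your two-phase traversal (walk up to the anchor $n^{*}$, then DFS its subtree) is a clean reorganization of the paper's interleaved up-and-down traversal, and your explicit two-containment argument is, if anything, more carefully structured than the paper's prose.
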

\begin{proof}
    First, we show correctness.  Let $C^*$ be the core for $\C/(u,k)$,
    that is $C^*$ is a $k$-core and $u \in C^*$. The traversal will cover
    all vertices in the subtree containing $u$ at level $k$ and higher.  By
    Lemma~\ref{lem:laminar} we know all denser cores are fully contained
    in the desired $k$-core. By Lemma~\ref{lem:compress}, we know that any
    split will occur in an explicit tree node with vertices in the
    resulting shell. So, this split will be captured by the tree
    traversal.  As such, all vertices in the tree nodes traversed with values $k$ or more 
    exactly form the $k$-core.

    Let down represent higher $k$ values in the tree.
    Next, we show efficiency. Every downward link in the subtree needs to
    be fully explored, and there are no nodes with overlapping vertices in
    the tree. Once a downward traversal occurs, there is no need to check
    parents. When traversing upwards, all children except the previous one
    will be explored downwards.  In each case every node is visited
    exactly once and all of its associated vertices are enumerated once and are part of the returned core.

    As \STI/ is a tree, whether to traverse to the parent can be decided based on whether the parents' value is lower than $k$.
    This will result in one additional operation.
    As such, the runtime is $O(\abs{\C/(u,k)})$ and
    efficient.
\end{proof}
\begin{theorem}
    The \STI/ takes $O(n)$ space.
\end{theorem}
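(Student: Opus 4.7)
The plan is to decompose the space usage of \STI/ into its constituent parts and bound each separately. Recall that \STI/ consists of the shell tree $T$ together with a map $M : V \to T$ that sends each vertex to the tree node containing it. I would show that each of these pieces takes $O(n)$ space.

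First I would handle the skeleton of $T$. By Lemma~\ref{lem:st-size}, $T$ has at most $n+1$ nodes and at most $n$ edges, so the tree structure itself (nodes, parent/child pointers, and the $k$ attribute on each node) requires $O(n)$ space.

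Next I would bound the space used by the vertex lists stored inside the tree nodes. By Definition~\ref{def:shell}, a $k$-shell consists of exactly the vertices of coreness $k$ within a particular $k$-core; hence each vertex of $G$ appears in the vertex list of at most one tree node (the root additionally stores isolated vertices, whose coreness is $0$). Summing the sizes of all vertex lists therefore gives at most $n$, so the total space for these lists is $O(n)$.

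Finally, the map $M$ has exactly $n$ entries, one per vertex of $G$, each pointing to a tree node, so it also takes $O(n)$ space. Adding the three contributions yields the claimed $O(n)$ bound. There is no real obstacle here; the only subtlety is the clean observation that shells partition $V$, which follows immediately from the coreness-based definition, so there is no double counting across the vertex lists of distinct tree nodes.
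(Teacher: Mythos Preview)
Your proposal is correct and follows essentially the same approach as the paper: decompose \STI/ into the map $M$, the tree skeleton (bounded via Lemma~\ref{lem:st-size}), and the vertex lists (bounded because shells partition $V$ so vertices are not duplicated across nodes). The paper's proof is terser---it just asserts ``there are no redundant vertices'' rather than invoking Definition~\ref{def:shell}---but the structure and key lemma are identical.
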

\begin{proof}
    The \STI/ consists of a map of size $n$ between vertices and tree nodes,
    along with the shell tree itself. By Lemma~\ref{lem:st-size}, the tree
    has at most $n+1$ nodes and $n$ tree edges. Each tree node may have
    vertices, but there are no redundant vertices. So, the size is
    $O(n+n+1+n+n) = O(n)$.
\end{proof}

The shell tree itself contains the hierarchy of
cores and shells, and so returning \STI/ efficiently resolves $\H/$
queries.

\begin{figure}[t]
    \centering
    \includegraphics{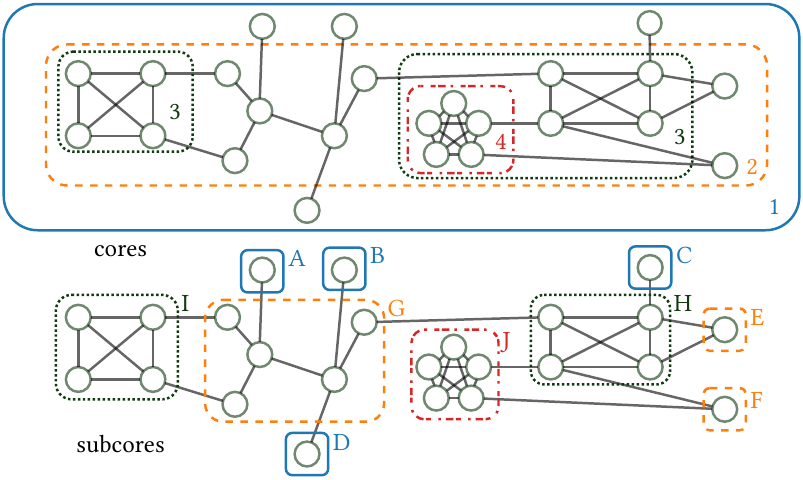}
    \caption{\label{fig:cf}
    An example graph along with its cores (top) and subcores (bottom).
    Note that a core may consist of multiple subcores, and subcores
    are disjoint.
    }
\end{figure}

\section{Computing the \STI/} \label{sec:computing-sti}

Computing (and maintaining) the \STI/ hinges on building (and maintaining)
the shell tree.  We propose a \emph{subcore directed acyclic graph}, that
provides the link between core decompositions and the shell tree.  In this
section we describe how to compute the \STI/ from scratch using the
subcore DAG.

This problem is broken into three parts:
computing coreness values, subcore DAG, and the shell tree.

\paragraph{Computing Coreness Values}

Computing coreness values has been well studied on graphs~\cite{matula1983smallest,dhulipala2017julienne}.  The most direct
approach, known as peeling, starts by keeping an array of vertex degrees. It then moves up
through coreness values, removing vertices with insufficient degree and
recording when they are removed. This is efficient, running in
$O(n+m)$, when using buckets~\cite{matula1983smallest}.
We refer the reader to \cite{malliaros2020core} for a survey.

\paragraph{Computing the Subcore DAG}

Next, we introduce the \emph{subcore directed acyclic graph (DAG)}, which is used to bridge
between coreness values and cores.

\begin{definition}
    Let $G$ be a graph.  A \emph{subcore} is a subgraph $C$ such
    that (1) $C$ is maximal (2) $\forall v \in C$, $\kappa[v] = k$ for
    some $k \in \mathbb{N}$ and (3) $C$ is connected.
\end{definition}
Subcores were introduced in~\cite{sariyuce2013streaming} to limit the
region that may have coreness values change on graph changes.
Figure~\ref{fig:cf} shows an example graph with cores and subcores.

\begin{figure}
    \centering
    \includegraphics{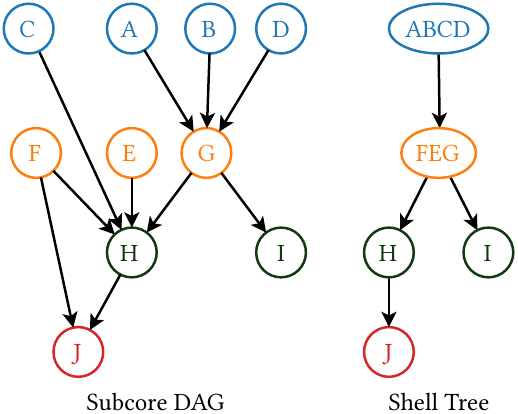}
    \caption{\label{fig:cfd}
    The corresponding subcore DAG and shell tree from the example
    graph in Figure~\ref{fig:cf}.
    }
\end{figure}

\begin{obs}\label{obs:cf-disjoint}
    Subcores are disjoint, by maximality of cores and property (2), and so
    the number of subcores is bound by $n$.
\end{obs}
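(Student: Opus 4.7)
The plan is to establish the two claims separately: first disjointness of subcores, then the cardinality bound as a direct corollary.

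For disjointness, I would argue by contradiction. Suppose two distinct subcores $C_1$ and $C_2$ share a vertex $v$. By property (2) of the subcore definition, every vertex in $C_1$ has a common coreness value $k_1$, and every vertex in $C_2$ has a common coreness value $k_2$. Since $\kappa[v]$ is well-defined and $v$ lies in both, we must have $k_1 = k_2$. Next, I would observe that $C_1 \cup C_2$ is connected, because both subgraphs are connected (property (3)) and they share $v$. Thus $C_1 \cup C_2$ is a connected subgraph whose vertices all have the same coreness value, so it satisfies properties (2) and (3); but it strictly contains both $C_1$ and $C_2$, violating the maximality in property (1) of either one. Hence no such shared vertex exists.

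For the cardinality bound, I would simply note that each subcore must contain at least one vertex (it is a subgraph defined on vertices with a specified coreness). Disjointness then implies that the total number of subcores cannot exceed $\abs{V} = n$.

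I do not anticipate a real obstacle here; the only subtlety is being precise that property (2) forces the shared coreness values to agree, so that the union really does witness a violation of maximality rather than being ruled out for some coreness mismatch reason. Everything else is bookkeeping on the three defining properties of a subcore.
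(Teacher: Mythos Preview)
Your proposal is correct and matches the paper's own reasoning, which is given only as the inline hint ``by maximality of cores and property (2)'' rather than a separate proof. You are actually more careful than the paper in explicitly invoking property~(3) (connectivity) to ensure $C_1 \cup C_2$ is connected before appealing to maximality; this step is needed but omitted from the paper's terse justification.
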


After breaking cores up into subcores, the glue to link them back
together is saved as a \emph{subcore DAG}.
The subcore DAG is built with a directed edge from every
lower $k$ subcore to a strictly higher $k$ subcore that it is \emph{directly connected} to.
The subcore DAG from Figure~\ref{fig:cf} and its shell tree is shown in
Figure~\ref{fig:cfd}.

\begin{lemma}\label{lem:cfd-size}
    The subcore DAG size is bound by $G$.
\end{lemma}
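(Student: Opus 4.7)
The plan is to bound the nodes and edges of the subcore DAG separately, each against a parameter of $G$, and then conclude the total size is $O(n+m)$.

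First I would handle the node count. This is essentially free from Observation~\ref{obs:cf-disjoint}: the subcores partition a subset of $V$ into disjoint connected vertex sets, so there are at most $n$ of them, and thus at most $n$ nodes in the DAG.

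Next I would bound the number of DAG edges by $m$. By construction a DAG edge exists from subcore $C_1$ to subcore $C_2$ (with strictly higher $k$) only when $C_1$ and $C_2$ are \emph{directly connected} in $G$, so I can pick a witness edge $e \in E$ with one endpoint in $C_1$ and one in $C_2$. The key step is to show that distinct DAG edges can be assigned distinct witness edges in $G$: since subcores are pairwise disjoint by Observation~\ref{obs:cf-disjoint}, an edge $e = \{u,v\} \in E$ lies between exactly one ordered pair of subcores (namely the unique subcores containing $u$ and $v$), so no edge in $G$ can serve as witness for two different DAG edges. This gives an injection from DAG edges into $E$, so there are at most $m$ DAG edges.

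Combining the two counts yields a total DAG size of at most $n + m$, which is bounded by the size of $G$. I do not anticipate a real obstacle here; the only step that requires care is verifying that the witness mapping is well-defined and injective, which follows directly from subcore disjointness and the construction of DAG edges out of pairs of subcores at strictly different $k$ values.
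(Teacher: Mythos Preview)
Your proof is correct and takes essentially the same approach as the paper: the paper phrases the argument as a graph contraction (each subcore is contracted to a single node, and contraction never adds vertices or edges), while you spell out the same content explicitly as a node bound via disjointness plus an injection from DAG edges to witness edges in $E$. The mathematical substance is identical.
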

\begin{proof}
    Each vertex in the subcore DAG corresponds to a connected
    subgraph in the graph, and every edge in the DAG is a directed edge
    that results from contracting all vertices in each subcore.
    Contraction only removes edges and vertices, and no new edges or vertices are added.
\end{proof}
\begin{obs}
    The subcore DAG is not a tree. Consider a $3$-clique and
    a $4$-clique, connected via an edge, and both connected to another
    vertex.  This forms a directed triangle in the DAG.
\end{obs}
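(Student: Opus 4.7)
The plan is to verify the observation by explicit construction on the given example. First I would fix notation for the 3-clique $T = \{a,b,c\}$, the 4-clique $K = \{d,e,f,g\}$, the connecting edge $cd$, and the extra vertex $w$ adjacent to one vertex in each clique (say with edges $wa$ and $we$). Next I would run the peeling algorithm to compute coreness values, identify the subcores as the connected components of the induced subgraph at each coreness level, and draw the subcore DAG by adding a directed edge from each subcore to every strictly higher-coreness subcore it is directly adjacent to in $G$. The claim then follows by exhibiting a non-root node of the DAG with in-degree at least two, since such a node induces an undirected cycle and so the DAG cannot be a tree.

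The main subtlety is ensuring that three distinct subcores actually arise. With $w$ of degree two linking the two cliques, peeling gives $w$ coreness $2$---the same value as the vertices of $T$---so $w$ is absorbed into the triangle's subcore and only two subcores appear, collapsing the DAG to a trivial path. To obtain a genuine directed triangle I would place $w$ at a strictly smaller coreness than $T$, for instance by attaching a short pendant structure so that $w$ peels out with coreness $1$, or by replacing the single vertex with a third dense region (for example a third clique of a different size with one connecting edge into each of the other two). Either modification yields three subcores that are pairwise directly adjacent in $G$, so the lowest-coreness subcore has two out-edges, the highest has two in-edges, and the subcore DAG is no longer a tree.
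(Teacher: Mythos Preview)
You are right that the example as literally stated in the paper does not produce three subcores: with $w$ adjacent to one vertex of each clique, $w$ has degree~$2$ and hence coreness~$2$, so it is absorbed into the triangle's subcore and the subcore DAG collapses to a single edge. The paper offers no proof beyond the one-line description, so noticing this imprecision is to your credit.

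Your first proposed repair, however, cannot work. Attaching any pendant structure to $w$ does not lower $\kappa[w]$: once all degree-$1$ vertices are peeled away, $w$ still retains its two edges into the cliques, and the cliques together with $w$ form a subgraph of minimum degree~$2$, so $w$ survives into the $2$-core and $\kappa[w]=2$ regardless. Your second repair is sound, provided the third clique has coreness distinct from both $2$ and $3$; a $5$-clique is the smallest choice that works. With $K_3$, $K_4$, $K_5$ and one connecting edge between each pair, peeling gives coreness $2$, $3$, $4$ respectively, the three cliques are exactly the three subcores, and the DAG carries all three edges $K_3\to K_4$, $K_4\to K_5$, $K_3\to K_5$, yielding the directed triangle the observation claims and a node ($K_5$) of in-degree~$2$.
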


\begin{algorithm}[tb]
    \KwIn{graph $G=(V,E)$, $\kappa$}
    $C \gets \emptyset$; $D \gets \emptyset$ \tcp{DAG vertices and edges}
    $L \gets [v : v \in V]$ \tcp{Labels}
    \kc{Compute the subcores}
    \For{$v \in V$}{
        \lIf{$L[v] \neq v$}{continue}
        $C \gets C \cup \{v\}$ \;
        \kc{Perform a BFS that stays within $\kappa$ levels from $v$}
        $Q \gets \mathrm{Queue}()$; $Q\mathrm{.push}(v)$ \label{alg:cfd:bfs-s} \;
        \While{$Q \neq \emptyset$}{
            $n \gets Q\mathrm{.pop}()$ \;
            \For{$w \in \Gamma(n): L[w] \neq v \land \kappa[w] = \kappa[v]$}{
                $Q\mathrm{.push}(w)$\;
                $L[w] = v$ \label{alg:cfd:bfs-e}\;
            }
        }
    }
    \kc{Produce the DAG edges}
    \For{$v \in V$ \label{alg:cfd:p-s}}{
        \For{$n \in \Gamma(v)$ where $L[v] \neq L[n]$}{
            $D \gets D \cup \{\tup{L[v], L[n]}\}$ \label{alg:cfd:p-e} \;
        }
    }
    \Return{DAG=$(C,D)$}
    \caption{\label{alg:cfd}
    Building the subcore DAG.
    }
\end{algorithm}

The process of building the subcore DAG is shown in
Algorithm~\ref{alg:cfd}.
This algorithm performs a breadth-fist search (BFS) for each vertex.
The search is constrained to stay within a $\kappa$ level, and DAG edges
are emitted on graph edges that leave $\kappa$ levels.
Efficient connected components algorithms,
e.g.,~\cite{shun2014simple}, could be used instead.

\begin{lemma}
    Algorithm~\ref{alg:cfd} runs in $O(n+m)$.
\end{lemma}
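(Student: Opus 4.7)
The plan is to decompose the running time into two phases: the subcore identification loop (the outer \textbf{for} on $v \in V$ that drives the BFSes) and the DAG-edge emission loop (lines~\ref{alg:cfd:p-s}--\ref{alg:cfd:p-e}). The second phase is a straightforward double loop visiting each vertex and each incident edge once, and adding to $D$ amortizes to $O(1)$ with a hash-based set, contributing $O(n+m)$.

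For the first phase the argument rests on two claims. First, each vertex triggers at most one BFS: the \textbf{if} check requires $L[v]=v$, and as soon as an earlier BFS rooted at some $v'$ reaches $v$ it overwrites $L[v]=v'\neq v$, permanently disqualifying $v$ from initiating its own BFS. Second, each vertex is popped at most once across \emph{all} BFSes combined. To see this, observe that the BFS from $v$ only expands through same-coreness neighbors, so it stays inside a single connected same-coreness component, which by Observation~\ref{obs:cf-disjoint} coincides with exactly one subcore; on completion every vertex of that subcore carries the label $v$. Any later BFS is rooted at a representative $v'$ living in a different subcore, and the filter $\kappa[w]=\kappa[v']$ blocks crossings into $v$'s component because the graph edges leaving it connect to a different coreness level. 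From these two claims, the total neighbor inspections sum to $\sum_{v}\deg(v)=2m$, plus $O(n)$ for the outer loop itself, giving $O(n+m)$ for the first phase.

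The main obstacle I expect is formalizing the second claim carefully, because \emph{a priori} the check $L[w]\neq v$ only prevents re-enqueueing \emph{within} the current BFS and does not by itself forbid a later BFS from re-examining an already-labeled $w$. I would close this gap by combining disjointness of subcores with the coreness filter: if $w$ already carries some label $v''$ from a prior BFS, then $w$ is contained in $v''$'s subcore, and the only way for a subsequent BFS rooted at $v'$ to reach $w$ would be along a same-coreness path, which cannot exist between two disjoint subcores. Adding the two $O(n+m)$ costs then yields the claimed bound.
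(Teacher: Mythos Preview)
Your proposal is correct and follows essentially the same approach as the paper: you decompose into the same two phases (the BFS-driven subcore labeling and the DAG-edge emission loop) and argue that each contributes $O(n+m)$ because every vertex and edge is touched a constant number of times. Your treatment is in fact more careful than the paper's---you explicitly identify and close the gap that the filter $L[w]\neq v$ alone does not preclude revisits across different BFSes, and you resolve it via subcore disjointness and the coreness filter---whereas the paper's proof simply asserts that each vertex and edge is visited once and that BFSes only start from unvisited vertices.
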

\begin{proof}
    From lines~\ref{alg:cfd:bfs-s}--\ref{alg:cfd:bfs-e}, inside the
    internal BFS, each vertex will be visited once. Inside, each
    edge will be visited once.  Finally, the entire BFS will only start from
    unvisited vertices.

    For lines~\ref{alg:cfd:p-s}--\ref{alg:cfd:p-e}, each vertex and edge
    will again be visited, resulting in $O(n+m)$ work.
\end{proof}

\subsection{Building the Shell Tree}

Given a subcore DAG and $\kappa$ values, we can compute the shell tree.
Our algorithm starts with the DAG and modifies it as it moves from the
sinks upwards (towards lower $k$ values), using a max-heap.
Each processed vertex: 1) identifies neighbors that are
at its $\kappa$ level, and merges itself with them;
2) sets a single node that is an in-neighbor with
the closest $\kappa$ value as the tree parent; and 3) moves all other in-edges to the
identified parent, ensure it becomes a tree.  The details are presented in
Algorithm~\ref{alg:st}.

\begin{algorithm}[tb]
    \KwIn{DAG=$(C,D)$, $\kappa$}
    $T=(N,E) \gets$ DAG \;
    $S \gets \emptyset$ \;
    $H \gets \mathrm{Heap}()$ \tcp{Empty Heap}
    \For{sink $s \in N$}{
        $H\mathrm{.push}(\kappa[s], s)$\;
    }
    \While{$H \neq \empty$}{
        $v \gets H\mathrm{.pop}()$ \;
        \lIf{$v \in S$}{{\bf continue}}
        $S \gets S \cup \{v\}$ \;
        \kc{Merge with neighbors at same level}
        \While{$\exists n \in \Gamma(v) : \kappa[n] = \kappa[v]$}{
            $\mathsf{Merge}(v, n)$ \;
            $S \gets S \cup \{n\}$ \;
        }
        \kc{Move all remaining and new in neighbors}
        $t \gets \argmax_{n \in \Gamma^{\mathrm{in}}(v)} \kappa[n]$ \;
        \For{$n \in \Gamma^{\mathrm{in}}(v)$}{
            \lIf{$n \neq t$}{$\mathsf{MoveEdge}(\tup{n,v} \to \tup{n,t})$}
            $H\mathrm{.push}(\kappa[n], n)$ \;
        }
    }
    \Return{$T$}
    \caption{\label{alg:st}
        Constructing the shell tree.
    }
\end{algorithm}

\begin{lemma}
    Algorithm~\ref{alg:st} correctly builds the shell tree.
\end{lemma}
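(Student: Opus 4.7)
The plan is to argue correctness by induction on the order in which tree nodes are popped from the heap, that is, by decreasing $\kappa$. I would maintain two invariants after each iteration: (I1) the merged tree node produced when popping $v$ equals exactly the $\kappa[v]$-shell containing the subcores identified with $v$; and (I2) every already-processed node has in-degree exactly one in the current $T$, with that in-edge coming from its correct parent in the shell tree. Together with Lemma~\ref{lem:st-tree}, these imply that at termination $T$ is the shell tree on the non-isolated vertices; the $0$-valued root for isolated vertices can be attached separately.

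For (I2), when $v$ is popped every $n \in \Gamma^{\mathrm{in}}(v)$ satisfies $\kappa[n] < \kappa[v]$ (new in-edges created by prior $\mathsf{MoveEdge}$ calls preserve this inequality), so by Lemma~\ref{lem:laminar} each such $n$ corresponds to a core strictly containing the core represented by $v$, and these enclosing cores are linearly ordered by containment. The $\argmax$ choice $t$ thus picks the minimal enclosing core; for any other in-neighbor $n \neq t$ the $\kappa[n]$-core contains the $\kappa[t]$-core, so in the true shell tree $n$'s tree-path to $v$ passes through $t$. The $\mathsf{MoveEdge}$ step implements exactly this redirection, leaving $v$ with a single in-edge from the correct parent while preserving (I2) for every previously processed node.

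The main obstacle is (I1): I must show that by the time $v$ is popped, every subcore $v'$ that lies in the same $\kappa[v]$-shell as $v$ is already a DAG neighbor of $v$, so that the merge loop can absorb it. Note the original subcore DAG has no same-$\kappa$ edges, so such adjacency can only arise from prior $\mathsf{MoveEdge}$ operations. I would establish this via a supporting claim proved by induction on $\kappa$: after every node of coreness greater than $k$ has been processed, any two coreness-$k$ subcores belonging to a common $k$-shell are directly connected in the current DAG. The inductive step rests on the observation that two such subcores were originally linked in the subcore DAG via a path through higher-coreness intermediaries, and each $\mathsf{MoveEdge}$ invocation collapses one intermediary downward toward its correct ancestor $t$ without introducing spurious cross-shell adjacencies, because $t$ itself lies on the true shell-tree path between the subcores involved. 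The base case is the highest $\kappa$ layer, where sinks have no unprocessed higher neighbors and the claim holds vacuously.
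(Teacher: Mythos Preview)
Your proposal follows the same downward-induction-on-$\kappa$ strategy as the paper, but is considerably more careful: you explicitly isolate invariants (I1) and (I2) and, crucially, you flag that same-shell subcores are \emph{not} adjacent in the original subcore DAG and can only become mergeable via prior $\mathsf{MoveEdge}$ steps---a point the paper's brief proof elides with the phrase ``which are all connected.''

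One small imprecision is worth fixing. Your supporting claim for (I1) asserts that any two coreness-$k$ subcores in a common $k$-shell become \emph{directly} adjacent once all higher-coreness nodes have been processed. This is too strong: with three subcores $A,B,C$ in one shell, each linked to the next through a separate higher-level intermediary, the $\mathsf{MoveEdge}$ cascade can yield $A$--$B$ and $B$--$C$ adjacencies without ever producing $A$--$C$. This does not actually break your argument, because the merge step in Algorithm~\ref{alg:st} is a \texttt{while} loop that repeatedly absorbs same-level neighbors---after each $\mathsf{Merge}(v,n)$, $n$'s former neighbors enter $\Gamma(v)$---so it discovers the entire same-level connected component, not just immediate neighbors. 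Relaxing your claim to ``connected via a path of coreness-$k$ nodes in the current DAG'' makes the induction go through cleanly. A similarly minor timing point: your statement that every $n\in\Gamma^{\mathrm{in}}(v)$ satisfies $\kappa[n]<\kappa[v]$ ``when $v$ is popped'' should read ``after the merge loop completes,'' since $\mathsf{MoveEdge}$ can create equal-$\kappa$ in-edges that the merge loop then absorbs.
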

\begin{proof}
    We argue that after running Algorithm~\ref{alg:st}, each node will exactly contain the shell.
    First, a node needs to contain all connected subcore DAG nodes at the given $\kappa$ value.
    Second, it cannot have additional nodes merged with it.
    We argue correctness via induction on $\kappa$.
    At the highest $\kappa$ level, by the DAG properties, we know the tree nodes connected to the sink are shells and valid.
    Now, consider a tree node with $\kappa$ and assume nodes at $\kappa' > \kappa$ are valid.
    The node is formed by merging DAG nodes at the same level, which are all connected.
    Any connectivity that is not at level $\kappa$ will be preserved by moving edges to the node's parent.
    By Lemma~\ref{lem:laminar}, we know that any DAG
    neighbors that it is connected to will also be connected to the
    parent, and so the new tree node is valid.
\end{proof}
\begin{lemma}
    Algorithm~\ref{alg:st} runs in $O(\rho(n+m)\log n)$.
\end{lemma}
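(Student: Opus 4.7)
The plan is to separate the total work of Algorithm~\ref{alg:st} into three sources---heap operations, edge-target redirections via $\mathsf{MoveEdge}$, and same-level merges---and to bound each in terms of $n$, $m$, and $\rho$. The central structural fact on which all three bounds rest is that each edge of the subcore DAG can have its target redirected at most $\rho$ times over the lifetime of the algorithm. Indeed, when processing $v$ we redirect an in-edge $\tup{n,v}$ to $\tup{n,t}$, where $t \in \Gamma^\mathrm{in}(v)$ is chosen by the $\argmax$; moreover, because the inner while-loop has already merged every neighbor of $v$ at the same $\kappa$ level with $v$, any remaining in-neighbor $t$ must satisfy $\kappa[t] < \kappa[v]$. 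Hence the $\kappa$ value of the target of any edge strictly decreases on each redirection, and since $\kappa$ values lie in $\{0, 1, \ldots, \rho\}$, each edge participates in at most $\rho$ $\mathsf{MoveEdge}$ calls.

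Given this, I would aggregate the work as follows. The total number of $\mathsf{MoveEdge}$ operations is $O(\rho m)$, each $O(1)$. Every heap push arises from iterating over the in-neighbors of some processed node, and because a DAG edge serves as an in-edge of at most $\rho+1$ distinct nodes over time, the total number of pushes---and therefore pops, since $v \in S$ simply causes a $O(1)$ continue---is $O(\rho m + n)$, where the additive $n$ captures the initial sink pushes. The heap's size is always $O(\rho m + n)$, so each priority-queue operation costs $O(\log(\rho m + n)) = O(\log n)$, yielding $O(\rho(n+m)\log n)$ total heap work. For merges, Observation~\ref{obs:cf-disjoint} bounds the starting number of DAG nodes by $n$, so at most $n-1$ merges occur; implementing $\mathsf{Merge}$ in a union-find style with edge-list concatenation makes their amortized cost $O(m)$, which is absorbed by the heap bound.

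Summing these contributions gives the claimed $O(\rho(n+m)\log n)$, dominated by the heap. The hardest step will be the careful bookkeeping around merges: when two same-level DAG nodes are combined, their in-edges pool together, and I need to verify that this neither inflates the number of heap pushes past $O(\rho m)$ nor breaks the strictly-decreasing-$\kappa$ argument for edge targets. Because a merge only inherits existing edges---never creating new ones---and because merged nodes share the same $\kappa$, the monotonicity argument survives, but spelling this out explicitly is the main technical obstacle. A secondary subtlety is confirming that the $\argmax$ and in-neighbor sweep per processed node are fully covered by the $O(\rho m)$ edge-visit budget rather than contributing a separate factor.
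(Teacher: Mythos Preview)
Your proposal is correct and follows essentially the same approach as the paper: the crux in both is that each DAG edge can be ``carried upward'' (redirected via $\mathsf{MoveEdge}$) at most $\rho$ times because the target's $\kappa$ value strictly decreases, yielding the $\rho(n+m)$ factor, with the $\log n$ from the heap. Your write-up is considerably more careful than the paper's---you explicitly account for repeated heap pushes, justify $\log(\rho m + n)=O(\log n)$, and flag the merge bookkeeping---whereas the paper's proof is a three-sentence sketch that leaves these details implicit.
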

\begin{proof}
    The heap processes each vertex once, and each vertex can
    potentially have all edges attached, resulting in $O(n+m)$ per
    iteration.
    However, edges may be carried upwards, and in the worst case
    all edges except one are carried upwards resulting in a factor of
    $\rho$. The log factor comes from the heap use.
\end{proof}

\section{Maintaining the \STI/} \label{sec:maint-sti}

In this section, we show how to maintain the \STI/ on a graph stream.  The
objective is to develop a batch dynamic algorithm $\mathcal{A}_\Delta$
that will output the shell tree \STI/, while having a small internal state
$\mathcal{A}_\Delta^\mathrm{s}$ and a quick runtime with low variability.

\subsection{Maintaining Coreness}

We refer the reader to
\cite{zhang2017fast,sariyuce2013streaming,li2013efficient,Gabert21-ParSocial} for algorithms
to maintain $\kappa$.
These approaches (and similarly \STI/) extend to trusses~\cite{cohen2008trusses} and other nuclei~\cite{sariyuce2015finding} by use of a hypergraph~\cite{Gabert21-WSDM}.
For our experiments we implemented and use
\Order/~\cite{zhang2017fast}, the state-of-the-art decomposition
maintenance algorithm.

For notational convenience, consider a time $t$.
Let $G^-$ denote $G^{(t)}$ and $G^+$ denote $G^{(t+\Delta)}$.
Let $\kappa^{-}$ denote the $\kappa$ values in $G^-$ and $\kappa^{+}$
denote $\kappa$ values in $G^+$.

We take advantage of the following crucial property of
coreness values on graphs: the subcore theorem.
\begin{theorem}[\cite{sariyuce2013streaming}]\label{thm:subcore}
    Let $\{u,v\}$ be an edge change.
    Suppose $\kappa_{G^-}[u] \leq \kappa_{G^-}[v]$.
    Then, only vertices in the subcore containing $u$ may have $\kappa$
    values change in $G^+$, and they may only change by 1 (increase by $1$ for
    insertion, decrease by $1$ for deletion.)
\end{theorem}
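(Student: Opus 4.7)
The plan is to prove the two claims separately: first the $|\Delta\kappa|\le 1$ bound, then localization to the subcore of $u$. I will handle edge insertion in detail; deletion follows symmetrically by swapping the roles of $G^-$ and $G^+$. Throughout, let $w$ be any vertex whose coreness changes and set $k=\kappa^-[w]$.

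For the change-by-one bound, suppose for contradiction that $\kappa^+[w]\ge k+2$, and let $K$ be the maximal $(k+2)$-core of $G^+$ containing $w$. Only the endpoints $u,v$ lose a neighbor when we restrict $K$ to $G^-$, so every vertex of $G^-[K]$ has degree at least $k+1$; restricting further to the connected component of $w$ preserves this minimum degree and produces a $(k+1)$-subgraph of $G^-$ containing $w$, contradicting $\kappa^-[w]=k$.

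For localization, let $K^+$ be the maximal $(k+1)$-core of $G^+$ containing $w$. If $K^+$ did not use the new edge, it would already be a $(k+1)$-core of $G^-$ containing $w$, contradicting $\kappa^-[w]=k$; thus $u,v\in K^+$. I would then rule out $\kappa^-[u]\ge k+1$: the hypothesis $\kappa^-[u]\le\kappa^-[v]$ would force $\kappa^-[v]\ge k+1$, and by Lemma~\ref{lem:laminar} applied in $G^+$ the maximal $(k+1)$-cores of $G^-$ containing $u$ and $v$ both embed into $K^+$. Then every vertex of $G^-[K^+]$ has degree at least $k+1$---vertices other than $u,v$ keep their $G^+[K^+]$-degree, and $u,v$ inherit degree $\ge k+1$ from their $G^-$-cores---so the connected component of $w$ in $G^-[K^+]$ would witness $\kappa^-[w]\ge k+1$, a contradiction. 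Hence $\kappa^-[u]=k$.

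For the connectivity claim, I would peel $G^-[K^+]$ at threshold $k+1$. A vertex $z\in K^+$ is removed iff $\kappa^-[z]<k+1$, which combined with the $\kappa^-[z]\ge k$ bound from the first part forces $\kappa^-[z]=k$ exactly. Any vertex other than $u,v$ inherits its $G^+[K^+]$-degree of at least $k+1$, so the initially peelable vertices are contained in $\{u,v\}$. In the case $\kappa^-[v]\ge k+1$, $v$ has $G^-[K^+]$-degree at least $k+1$ (by the embedding argument above), so $u$ alone seeds the cascade, which propagates along $G^-$-edges to neighbors of removed vertices: the removed set is thus connected in $G^-$, every member has $G^-$-coreness $k$, and all of them---including $w$---lie in the subcore of $u$. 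The main obstacle is the subcase $\kappa^-[v]=k$, where $v$ may seed a second, disjoint cascade; I would resolve this by a separate argument showing that when $u$ and $v$ belong to different $G^-$-subcores no coreness values change at all (making the theorem vacuous in that case), and when they share a subcore both cascade seeds already lie in the subcore of $u$, so the connected-set conclusion still applies.
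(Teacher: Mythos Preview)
The paper does not prove this theorem; it simply cites it from~\cite{sariyuce2013streaming}. So there is no paper proof to compare against, and I will just assess your argument on its own.

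Your change-by-one argument is correct, and your localization argument is sound in the case $\kappa^-[u]<\kappa^-[v]$: the peeling cascade in $G^-[K^+]$ is seeded only at $u$, propagates along $G^-$-edges among vertices of $G^-$-coreness exactly $k$, and therefore stays inside the $G^-$-subcore of $u$.

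The gap is in the subcase $\kappa^-[u]=\kappa^-[v]=k$. Your proposed resolution---that if $u$ and $v$ lie in different $G^-$-subcores then no coreness changes---is false. Take a $K_4$ on $\{p,q,r,s\}$, attach $u$ to $p,q$ and $v$ to $r,s$. In $G^-$ we have $\kappa^-[p]=\kappa^-[q]=\kappa^-[r]=\kappa^-[s]=3$ and $\kappa^-[u]=\kappa^-[v]=2$, with $u$ and $v$ each forming a singleton subcore. After inserting $\{u,v\}$, every vertex has degree at least $3$ and the whole graph is a $3$-core, so $\kappa^+[u]=\kappa^+[v]=3$. Both $u$ and $v$ change, yet $v$ is not in the $G^-$-subcore of $u$. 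This simultaneously shows that your subcase claim fails and that the theorem, read literally with the subcore taken in $G^-$, is not quite correct when $\kappa^-[u]=\kappa^-[v]$.

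The fix, which is what the original source effectively uses, is to take the subcore in the graph \emph{with} the edge present, still using the $\kappa^-$ labels: the connected component of $u$ in the subgraph of $G^+$ induced by $\{w:\kappa^-[w]=k\}$. Since $\{u,v\}\in E(G^+)$ and $\kappa^-[u]=\kappa^-[v]=k$, this set automatically contains $v$, and your peeling argument then goes through verbatim---both seeds $u,v$ lie in it, and every cascaded vertex is joined by a $G^-$-edge to a previously removed vertex of coreness $k$. Equivalently, one can state the conclusion as ``changed vertices lie in the union of the $G^-$-subcores of $u$ and of $v$,'' which your cascade argument already establishes; in the strict case this union collapses to the subcore of $u$.
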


\subsection{Single Edge Maintenance Algorithm} \label{sec:se}

\begin{algorithm}
    \KwIn{graph $G=(V,E)$, $e=\{u,v\}$, $\kappa^-$, $\kappa^+$, \STI/ $=(M, T)$}
    \lIf{$\kappa^-[u] > \kappa^-[v]$}{swap $u$, $v$}
    $K \gets M[u]$ \tcp*{find the tree node for $u$}
    $S \gets \{ w \in V : \kappa^-[w] \neq \kappa^+\}$ \;
    \If{$M[u]\mathrm{.vertices} = S$}{
        \kc{The entire shell moves as one subcore}
        \For{$c \in K\mathrm{.children}$}{
            \lIf{$c\mathrm{.k} = k+1$}{$\mathsf{Merge}(K, c)$}
        }
        $K\mathrm{.k} \gets k+1$ \;
        \Return{$T$}
    }
    \kc{We need to merge or create a new sink}
    $K\mathrm{.vertices} \gets K\mathrm{.vertices} \setminus S$ \;
    $X \gets \tup{K, k+1, S}$ \tcp*{new tree node with parent $K$, level $k+1$, vertices $S$}
    \For{$w \in S$}{
        \For{$n \in \Gamma_{G^-}(w) \setminus S$}{
            \lIf{$\kappa^+[n] \geq k+1$}{
                $\mathsf{MergeOrConnect}(X, M[n])$
            }
        }
    }
    \kc{Merge the path with $v$}
    $c \gets M[v]$, $l \gets \mathrm{SINK}$ \;
    \While{$\kappa[c] \geq \kappa^+[u]$}{
        $l \gets c$; $c \gets c\mathrm{.parent}$ \;
    }
    $\mathsf{MergePaths}(X, c)$ \;
    \Return{$T$}
    \caption{\label{alg:maint-inc}
    \SE/ (incremental case).
    }
\end{algorithm}

\begin{algorithm}
    \KwIn{\STI/ $= (M,T)$, $U$, $V$}
    \lIf{$U = V$}{\Return{}}
    \lIf{$\kappa[U] > \kappa[V]$}{swap $U$, $V$}
    $c \gets V$; $l \gets \mathrm{SINK}$ \;
    \While{$\kappa[c] \geq \kappa[U]$}{
        $l \gets c$; $c \gets c\mathrm{.parent}$ \;
    }
    \If{$\kappa[U] = \kappa[c]$}{$\mathsf{Merge}(U, c)$\;
        \Return{$\mathsf{MergePaths}(c, U\mathrm{.parent})$}}
    \Else{$\mathsf{MakeChild}(U, c)$\;
        \Return{$\mathsf{MergePaths}(c, U)$}}
    \caption{\label{alg:merge-paths}
    $\mathsf{MergePaths}$, which merges two paths starting from tree nodes
    $U$ and $V$ until the root.
    }
\end{algorithm}

\begin{figure}
    \centering
    \includegraphics{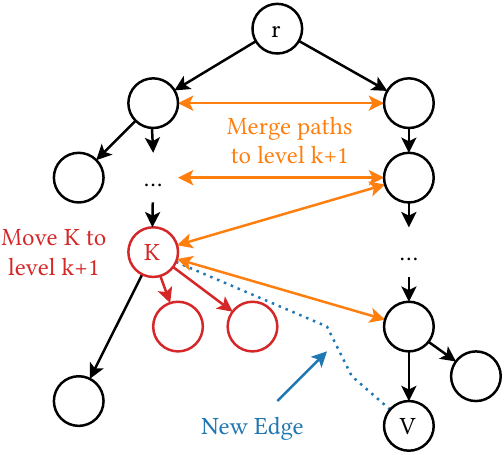}
    \caption{\label{fig:inc}
    The incremental algorithm process. First, the tree node corresponding
    to the smaller $\kappa$ level vertex, $K$, is processed.
    Next, the paths to $K$ and to the tree node
    being connected are merged to level $k+1$.
    }
\end{figure}

The main idea for maintaining the \STI/ edge-by-edge is to first break
apart any core
or shell that was increased and then repair the tree by merging together
the paths from the endpoints. For deletions,
a map is made that determines where, after a core is split, it could
return to in the tree.
Then, the path from the core to the root
is traversed and any potential split is determined.
Our algorithm shares many similarities to the community search algorithm of~\cite{fang2017effective}.
Our algorithm addresses cores instead of the more general community search problem on attributed graphs.
Specifically, it does not need to support queries involving subsets of vertices.
We refer to this approach as \SE/.
We describe insertions in detail---deletions are similar but split nodes~\cite{fang2017effective}.

Let $K$ be the tree node that has a \emph{lower $\kappa$ value} given an edge insertion.
We first check if all of $K$'s vertices leave.
If so, we move $K$ down and merge its children with connected subcores.
Next, we iterate through the moved
vertices and identify if they are connected to a shell tree node at
level $k+1$.
If so, we merge those shell tree nodes together. If not, we
create a new tree node for the moved vertices.
Then, we walk up the tree from both endpoints
and, starting at level $k+1$, begin merging all visited vertices.
The algorithm is presented in Algorithm~\ref{alg:maint-inc}, with merge paths presented in Algorithm~\ref{alg:merge-paths}.
A visual depiction is given in Figure~\ref{fig:inc}.

\begin{lemma}
    The runtime for Algorithm~\ref{alg:maint-inc} is $O(\abs{\Gamma(S)}
    + \rho n)$, where $S$ is the subcore that increases $\kappa$.
\end{lemma}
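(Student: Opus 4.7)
The plan is to decompose Algorithm~\ref{alg:maint-inc} into a ``local'' phase that touches the graph around the changed subcore $S$, and a ``tree-climbing'' phase that walks up the shell tree via the ascent loop and $\mathsf{MergePaths}$, and then bound each phase separately before summing.

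For the local phase I would first dispose of the special branch where $M[u]\mathrm{.vertices} = S$: the inner loop iterates over $K\mathrm{.children}$, which has size at most $n$ by Lemma~\ref{lem:st-size}, and this cost is absorbed into the $\rho n$ term. In the general branch, the doubly nested loop over $w \in S$ and $n \in \Gamma_{G^-}(w) \setminus S$ enumerates exactly the edges of the neighborhood of $S$, giving $O(|\Gamma(S)|)$ work overall (including the $|S|$ outer iterations, since $S \subseteq \Gamma(S)$ by the paper's definition). Each $\mathsf{MergeOrConnect}$ call performs at most an edge insertion or triggers a merge; I would charge the edge insertion to $|\Gamma(S)|$ and defer the cost of any actual merge to the tree-climbing analysis below, so that we do not double-count edge-rewiring work.

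For the tree-climbing phase, the key observation is that both the initial ascent that computes $c$ from $M[v]$ and every recursive call to $\mathsf{MergePaths}$ strictly increase the $\kappa$ value of the current tree node, moving rootward one tree level per step. Since $\kappa$ takes values in $\{0,1,\dots,\rho\}$ by Definition~\ref{def:degeneracy}, the total number of such steps is $O(\rho)$. At each step the algorithm either calls $\mathsf{MakeChild}$, which is $O(1)$, or calls $\mathsf{Merge}$, which must re-attach every tree edge incident to one of the merged nodes to the surviving node. By Lemma~\ref{lem:st-size} any tree node can have up to $n$ incident tree edges, so a single $\mathsf{Merge}$ costs $O(n)$ in the worst case. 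Multiplying by the $O(\rho)$ depth of the climb yields $O(\rho n)$, and adding the local $O(|\Gamma(S)|)$ bound gives the claim.

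The main obstacle is the per-step bound on $\mathsf{Merge}$: tree edges carried up by a deeper merge can be re-encountered at a shallower merge, so one cannot simply apply a global ``sum of tree edges is $n$'' amortization across the whole climb. My plan is to avoid amortization entirely and argue instead that each individual merge pays for its currently incident edges, which is at most $n$ by Lemma~\ref{lem:st-size} applied to the shell tree at that moment, and then take a pessimistic sum over the $O(\rho)$ levels. A secondary subtlety is confirming that the loop over $n \in \Gamma_{G^-}(w) \setminus S$ uses $G^-$ rather than $G^+$, so the $|\Gamma(S)|$ term really is the pre-update neighborhood size; this is immediate from the algorithm text but worth flagging when the bound is instantiated.
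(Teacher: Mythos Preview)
Your decomposition into a local $O(|\Gamma(S)|)$ phase and a tree-climbing $O(\rho n)$ phase is exactly the paper's argument, only spelled out in more detail than the paper's own two-sentence proof. One small slip to fix: moving rootward in the shell tree strictly \emph{decreases} the $\kappa$ value, not increases it, but your $O(\rho)$ bound on the number of steps is unaffected since it only relies on $\kappa$ being monotone along root-to-leaf paths with range contained in $\{0,\ldots,\rho\}$.
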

\begin{proof}
    In the first part, the modified subcore and all of its immediate
    neighbors are accessed, resulting in $O(\Gamma(S))$ work. After that,
    in the worst case, the height of the tree will be accessed to find the
    closest neighbor to merge in, resulting in $O(\rho n)$ work.
\end{proof}

\begin{figure}
    \centering
    \includegraphics{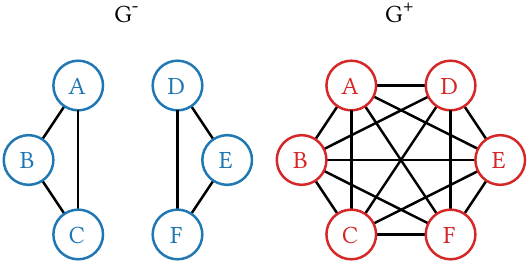}
    \caption{\label{fig:multins}
    An example graph before a batch of insertions ($G^{-}$) and after ($G^{+}$).
    The coreness moves from $\kappa=2$ to $\kappa=5$ for each vertex.
    }
\end{figure}

\begin{figure*}
    \centering
    \includegraphics{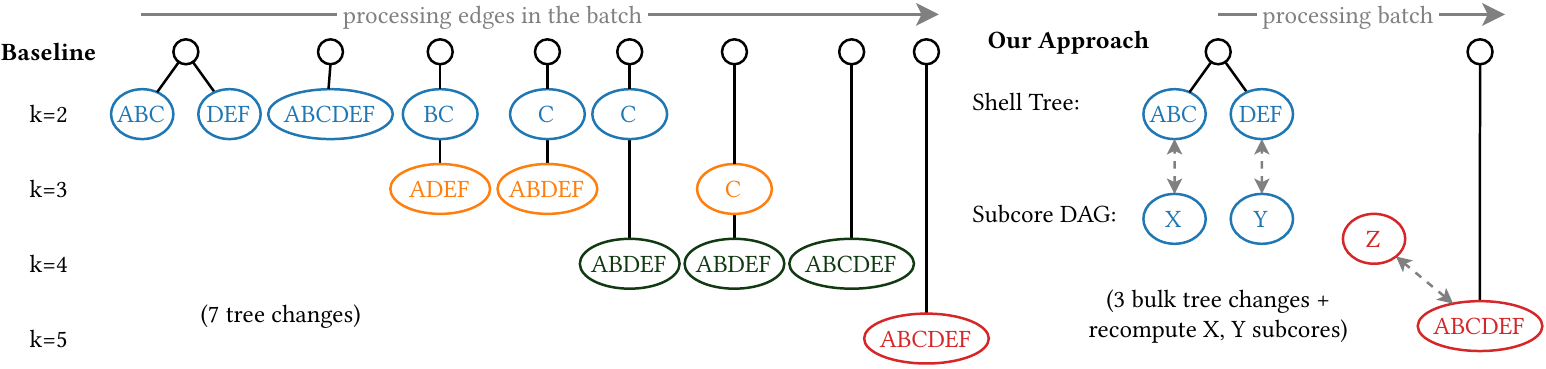}
    \caption{\label{fig:multiins-baseline}
    Following the example in Figure~\ref{fig:multins}, we show the tree changes processing with \SE/ compared with our batch approach.
    The cost is an increase in memory to store the subcore DAG and unnecessary work if a modified subcore does not significantly change.
    }
\end{figure*}

\subsection{Batch Maintenance}

We now present our batch maintenance algorithm.
First, we present the opportunity for reducing work by providing an example.
In Figure~\ref{fig:multins}, we show the graph before and after the batch.

The idea is to \emph{keep the subcore DAG in memory} and use it to update the subcore tree.
This can naturally be combined with \SE/ to provide a hybrid approach, moving between the two based on a batch size.
We maintain an additional pointer between every node in the tree and every node in the subcore DAG.
There are two main parts to maintaining the subcore tree in the subcore batch algorithm.
First, we maintain the subcore DAG by iterating over changed vertices and recomputing any subcore changes, creating and merging subcores (locally) as appropriate.
Second, we need to maintain the \STI/ given the DAG changes.
To do this we begin by making all of the DAG changes propagate forward to the tree.
Any deleted DAG node results in deleting the reference from the subcore tree, any newly empty tree nodes are deleted, and any new DAG nodes and their connections are added to the tree.
The tree is now no longer a DAG.
We then run the heap-based Algorithm~\ref{alg:st} to finish turning the modified structure back into a tree.
During this process we maintain the reverse vertex maps.
Unlike \SE/, our batch approach naturally covers deletions identically to insertions and both insertions and deletions can be mixed inside of batches.
This is due to handling both endpoints of an edge change, instead of only the endpoint with a lower $\kappa$ value at some point in time.
The approach is shown in Algorithm~\ref{alg:batch}.
Following the example in Figure~\ref{fig:multins}, we show the saved work between \SE/ and \BA/ in Figure~\ref{fig:multiins-baseline} (next page).

\begin{algorithm}
    \KwIn{\STI/ $= (M,T)$, DAG $D$, batch $B$}
    $C \gets \{ v : v \in e \in B \}$;
    $K \gets \emptyset$ \;
    $I \gets \emptyset$ \tcp*{Visited set}
    \For{$v \in C$}{
        \lIf{$v \in I$}{continue}
        $I \gets I \cup \{v\}$\;
        $Q \gets \mathrm{Queue}$; $Q\mathrm{.push}(v)$ \tcp*{Change queue}
        \While{$Q \neq \emptyset$}{
            $q \gets Q\mathrm{.pop}()$\;
            $n_d,n_T \gets L[q]$ \tcp*{DAG/Tree node of $q$}
            $K \gets K \cup \{ n_D \}$\;
            $n_d' \gets $ new DAG node \;
            assign $q$ to $n_d'$ in $D$ and $M, T$ \;
            $S \gets \mathrm{Queue}$; $S\mathrm{.push}(q)$ \tcp*{Subcore queue}
            \While{$S \neq \emptyset$}{
                $n \gets S\mathrm{.pop}()$ \;
                \tcc{Check if $n$ is in the subcore}
                \If{$\kappa^+[n] \neq \kappa^+[q]$ }{
                    \tcc{If $n$ changed, process it separately}
                    \If{$n \not\in I$ and $\kappa^-[n] \neq \kappa^+[n]$}{
                        $I \gets I \cup \{n\}$ \;
                        $Q\mathrm{.push}(n)$ \;
                    }
                    continue \;
                }
                \If{$n \not\in I$}{
                    $I \gets I \cup \{n\}$ \;
                    $Q\mathrm{.push}(n)$ \;
                }
                assign $n$ to $n_d'$ in $D$ and $M,T$ \;
            }
        }
    }
    remove newly isolated nodes in $D$ \;
    copy DAG edges from DAG nodes in $K$ to $T$ \;
    remove newly empty tree nodes in $T$ \;
    run Algorithm~\ref{alg:st} \;
    \caption{\label{alg:batch}
    The \BA/ algorithm.
    }
\end{algorithm}

Our runtime is the cost of Algorithm~\ref{alg:st} plus the cost of a BFS over each modified subcore.
Correctness follows from Algorithm~\ref{alg:st} as we maintain the built data structures and operations.
In the worst case this can be the runtime of Algorithm~\ref{alg:st}.
However, note that the BFS on subcores is limited to modified subcores.
As such, empirically we run faster than re-computing from scratch, as shown in the following Section~\ref{sec:experiments}.

\section{Empirical Analysis}\label{sec:experiments}

In this section we perform an experimental evaluation of our approach to
demonstrate that it is able to provide core queries on rapidly changing
real-world graphs.

\paragraph{Environment}
We implemented our algorithm in C++ and compiled with GCC 10.2.0 at
\texttt{O3}.
We ran on Intel Xeon E5-2683 v4 CPUs at 2.1 GHz with 256 GB of RAM and CentOS 7.
To perform coreness maintenance, we implemented \Order/~\cite{zhang2017fast}.
Any coreness maintenance approach can be used in its place.
We include all memory allocation costs in our runtimes.
We use a hash map of vectors to store the graph, and store both in- and
out-edges.
We ran five trials for each experiment and show the results from all
trials.

\paragraph{Baseline}
As our baseline, we implemented the non-batch maintenance approach
from~\cite{fang2017effective}, which we ported to the case of computing cores on graphs (see Section~\ref{sec:se}). We refer to this as \SE/.
When operating on a batch, \SE/ runs independently for each
edge change. Insertions and deletions can therefore easily be mixed.
We only show results with insertions as they are the harder case~\cite{fang2017effective} and there are few known benchmark datasets with frequent deletions.

\paragraph{Datasets}
The graphs that we evaluate with are benchmark graphs that are
representative of real-world graphs from a variety of domains and with
different properties.  We downloaded them from SNAP~\cite{snapnets}
(excluding Ar-2005, downloaded from~\cite{BoVWFI}).  The graphs we use are given in
Table~\ref{tab:datasets}.  We cleaned the data by removing self loops and
duplicates edges and treated graphs as undirected.
We randomized the edge order, simulating a graph stream, and performed our experiments by first removing random edges and next inserting them.

\begin{table}
    \caption{\label{tab:datasets}
    Graphs used with $n$, $m$ in millions.
    }
    \centering
    \begin{tabular}{lrrr} \toprule
        Name & $n$,\ \ \ $m$ & DAG $n$, \ \ $m$ & $\abs{T}$ \\ \midrule
        Ar-2005~\cite{BoVWFI,BRSLLP} & 22, 640 & 12, \ 47 & 28 K\\
        Orkut~\cite{yang2015defining} & 3, 117 & 1, \ 22 & 254 \ \ \ \\
        LiveJ~\cite{yang2015defining} & 4, \ 35 & 2, \ 12 & 2 K \\
        Pokec~\cite{takac2012data} & 2, \ 22 & 1, \ \ 5 & 54 \ \ \ \\
        Patents~\cite{leskovec2005graphs} & 4, \ 17 & 2, \ \ 4 & 4 K \\
        BerkStan~\cite{leskovec2009community} & 0.7, \ \ 7 & 0.2, 0.8 & 2 K \\
        Google~\cite{leskovec2009community} & 1, \ \ 4 & 0.4, 1.2 & 5 K \\
        YouTube~\cite{yang2015defining} & 1, \ \ 3 & 1, 2.5 & 140 \ \ \  \\
        \bottomrule
    \end{tabular}
\end{table}

\begin{figure}
    \centering
    \includegraphics{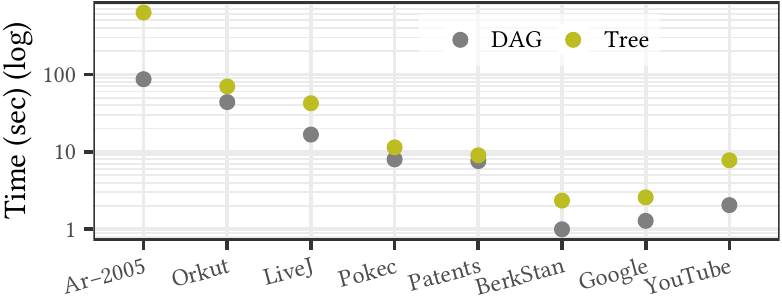}
    \caption{\label{fig:index}
    The \STI/ construction time, broken down into DAG construction and Tree construction.
    }
\end{figure}

\begin{figure}
    \centering
    \includegraphics{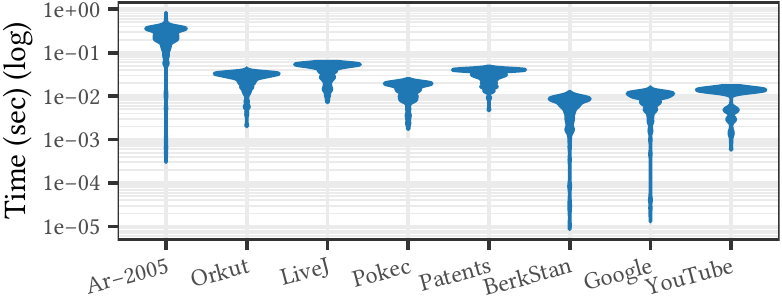}
    \caption{\label{fig:qC}
    The runtime to return \C/ queries.
    On all graphs, the runtimes are low enough for interactive use.
    }
\end{figure}

\begin{figure}
    \centering
    \includegraphics{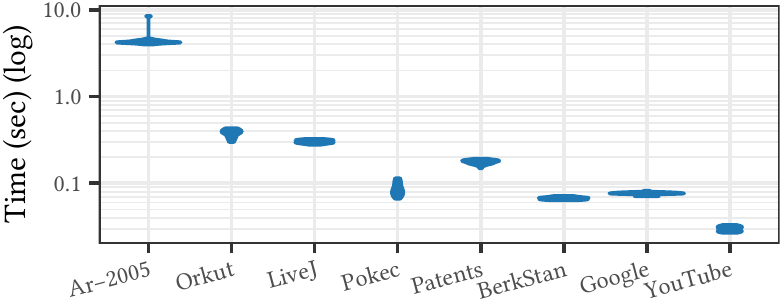}
    \caption{\label{fig:qH}
    The runtime to return \H/ queries.
    }
\end{figure}

\begin{figure*}
    \centering
    \includegraphics{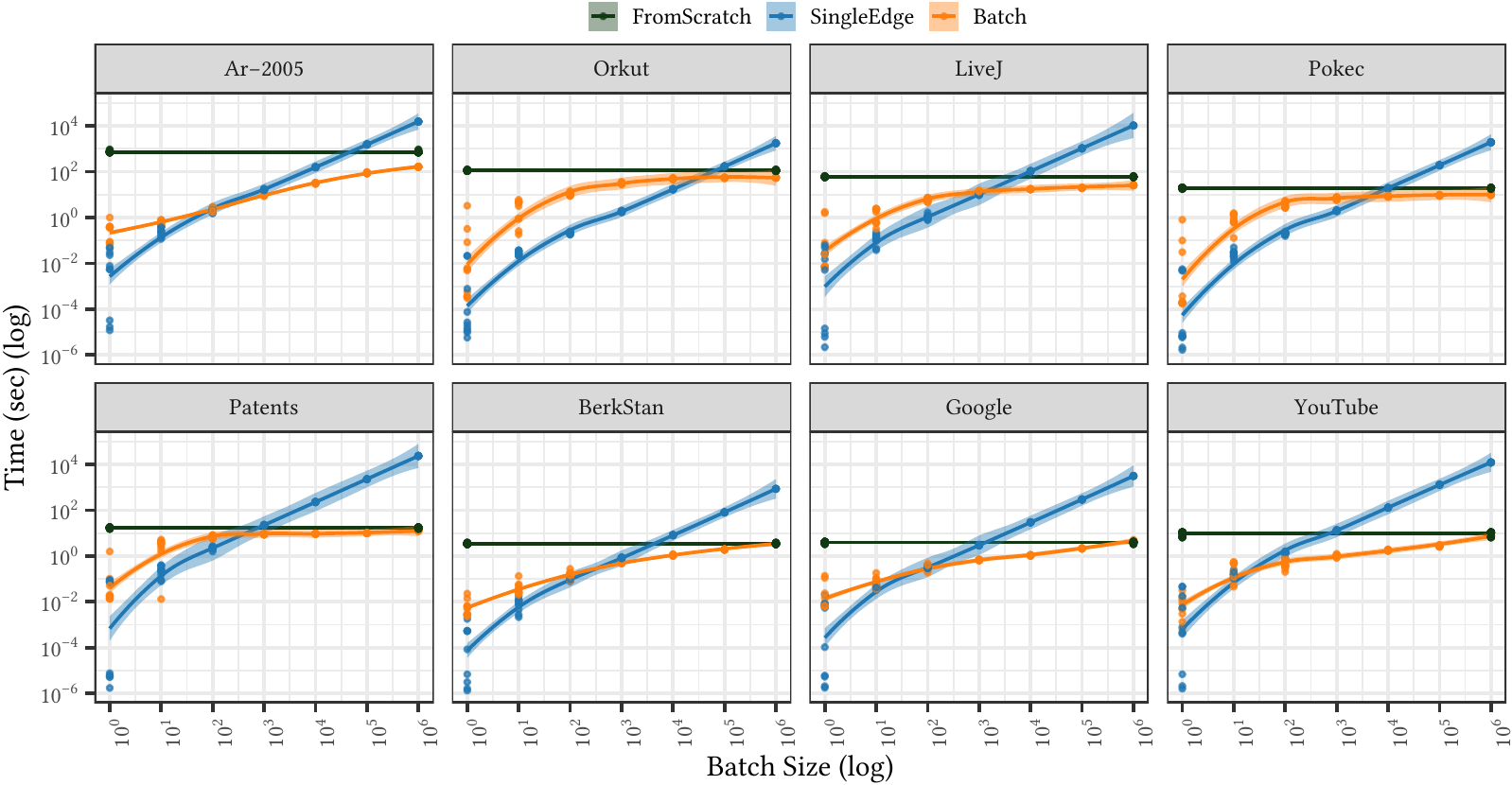}
    \caption{\label{fig:varybatch}
    Varying the batch size and running \BA/, \SE/, and \FS/.
    The batch algorithm is orders of magnitude faster than \SE/ for batches above $10^5$ and remains below re-computing from scratch up to $10^6$.
    The data points and LOESS smoothing lines with 95\% confidence intervals are shown.
    }
\end{figure*}

\paragraph{Experiments}

Our main experimental goal is to evaluate the real-world feasibility of
our approach on modern graphs and systems with highly variable and large batch
sizes.

First, we show the index construction time for \BA/.
The results are shown in Figure~\ref{fig:index}.
In all cases building the tree is more expensive than building the DAG.
The overall runtime reinforces the need for dynamic algorithms as for
large graphs, such as Orkut, the DAG construction takes around 90 seconds
and the tree construction takes around 330 seconds.

Next, we want to show that \STI/ is a useful index for cores.
We report the query times for \C/ in Figure~\ref{fig:qC} and \H/ in Figure~\ref{fig:qH} on \STI/.
For \C/, we performed queries from 1000 randomly sampled vertices with uniformly random $k$-values such that the vertex is in a $k$-core.
For all graphs, all cores are returned in under one second with many in the tens of milliseconds.
Given that our query is efficient the runtime largely consists of copying memory.
The denser the core the faster the return tends to be, as there are fewer vertices to copy out.
In many cases, the runtimes are fast enough to be used for interactive applications, e.g., in web page content.
For \H/, we report the time to build and return the full hierarchy, including each node at each level.
This is under 10 seconds for all graphs, showing that full hierarchies can be used for interactive time applications.

Finally, we maintained cores for 100 batches of different batch sizes for each graph.
The results are shown in Figure~\ref{fig:varybatch}.
In all cases, when batch sizes are large \BA/ remains below both \FS/ and
\SE/.
For a batch dynamic algorithm, we are looking for the region below re-computing from scratch and below single-edge algorithsm.
In some graphs, such as Pokec and Patents, it is not a large region, however in all graphs it exists and provides significant improvements.
Future work involves combining the DAG construction and maintenance with the direct tree maintenance to achieve an effective hybrid approach, achieving the lower of the all of the curves.
Note that these are log-log plots, and so even for Patents our batch approach is $2 \times$ faster than re-computing from scratch at batch sizes of one million.

\section{Conclusion} \label{sec:conclusion}

We focus on the important but overlooked problem of returning
\emph{cores}, as opposed to \emph{coreness} values.
We consider both core
queries, which return a $k$-core, and hierarchy queries, which return the full core hierarchy.
Our approach applies beyond $k$-cores to other arbitrary nuclei, such as trusses.

We develop algorithms around a tree-based index, the \STI/, that
is efficient and takes linear space in the number of graph vertices.
We provide an algorithm to construct the \STI/ using a new approach based on a subcore DAG.
We design and implement a batch maintenance algorithm for \STI/ that uses the same subcore DAG and can handle variable and high batch sizes.
We show that our approach is able to run faster than edge-by-edge approaches on rapidly changing graphs and can return cores and hierarchies fast enough for interactive use.

\begin{acks}
    This work was funded in part by the NSF under Grant CCF-1919021 and in part by the Laboratory Directed Research and Development program at Sandia National Laboratories.
    Sandia National Laboratories is a multimission laboratory managed and operated by National Technology \& Engineering Solutions of Sandia, LLC, a wholly owned subsidiary of Honeywell International Inc., for the U.S. Department of Energy's National Nuclear Security Administration under contract DE-NA0003525.
\end{acks}

\balance

\bibliographystyle{abbrv}

\end{document}